\newcommand{\comment}[1]{}
\newcommand{\blankbrac}{[ \cdot,\cdot ]}
\newcommand{\brac}[2]{\left \{ #1,#2 \right\}}
\newcommand{\cbrac}[2]{\left \llbracket #1,#2 \right \rrbracket}
\newcommand{\tcbrac}[2]{\left \llbracket #1,#2 \right \rrbracket_{\omega}}
\newcommand{\ham}{\mathrm{Ham}\left(M\right)}
\newcommand{\cinf}{C^{\infty}}
\newcommand{\lie}[2]{\mathcal{L}\,_{v_{#1}}\, #2}
\renewcommand{\L}{\mathcal{L}}
\newcommand{\ip}[1]{\iota_{v_{#1}}}
\newcommand{\Vect}{\mathrm{Vect}_{H}}
\newcommand{\D}{\mathcal{D}}
\newcommand{\innerprod}[2]{\langle #1,#2 \rangle}
\DeclareMathOperator{\ad}{\mathrm{ad}}
\newcommand{\tensor}{\otimes}
\newcommand{\maps}{\colon}
\newcommand{\half}{\frac{1}{2}}
\DeclareMathOperator{\id}{\mathrm{id}}
\DeclareMathOperator{\im}{\mathrm{im}}
\DeclareMathOperator{\cp}{\circlearrowleft}
\theoremstyle{plain}
\newtheorem{theorem}{Theorem}[section]
\newtheorem{prop}[theorem]{Proposition}
\newtheorem{lemma}[theorem]{Lemma}
\newtheorem{definition}[theorem]{Definition}
\theoremstyle{remark}
\newtheorem{example}{Example}
\title[Courant algebroids from categorified symplectic
geometry]{Courant algebroids from categorified symplectic geometry} \author{Christopher L.\ Rogers}
\email{\texttt{chris@math.ucr.edu}} \address{Department of
  Mathematics, University of California, Riverside, California 92521,
  USA} \date{\today} \thanks{This work was partially supported by a
  grant from The Foundational Questions Institute.}
\begin{document}

\begin{abstract}
  In categorified symplectic geometry, one studies the categorified
  algebraic and geometric structures that naturally arise on manifolds
  equipped with a closed nondegenerate $(n+1)$-form. The case
  relevant to classical string theory is when $n=2$ and is called
  `2-plectic geometry'. Just as the Poisson bracket makes the smooth
  functions on a symplectic manifold into a Lie algebra, there is a
  Lie 2-algebra of observables associated to any 2-plectic
  manifold. String theory, closed 3-forms and Lie 2-algebras also play
  important roles in the theory of Courant algebroids. Courant
  algebroids are vector bundles which generalize the structures found
  in tangent bundles and quadratic Lie algebras. It is known that a
  particular kind of Courant algebroid (called an exact Courant
  algebroid) naturally arises in string theory, and that such an
  algebroid is classified up to isomorphism by a closed 3-form on the
  base space, which then induces a Lie 2-algebra structure on the
  space of global sections. In this paper we begin to establish
  precise connections between 2-plectic manifolds and Courant
  algebroids. We prove that any manifold $M$ equipped
  with a 2-plectic form $\omega$ gives an exact Courant algebroid
  $E_{\omega}$ over $M$ with \v{S}evera class $[\omega]$, and we
  construct an embedding of the Lie 2-algebra of observables into the
  Lie 2-algebra of sections of $E_{\omega}$. We then show that this
  embedding identifies the observables as particular infinitesimal
  symmetries of $E_{\omega}$ which preserve the 2-plectic structure on
  $M$.
\end{abstract}

\maketitle

\section{Introduction}
\label{introduction}
The underlying geometric structures of interest in categorified
symplectic geometry are multisymplectic manifolds: manifolds equipped
with a closed, nondegenerate form of degree $\geq 2$
\cite{Cantrijn:1999}. This kind of geometry originated in the work of
DeDonder \cite{DeDonder} and Weyl \cite{Weyl} on the calculus of
variations, and more recently has been used as a formalism to
investigate classical field theories \cite{GIMM,Helein,Kijowski}.  In
this paper, we call a manifold `$n$-plectic' if it is equipped with a
closed nondegenerate $(n+1)$-form. Hence ordinary symplectic geometry
corresponds to the $n=1$ case, and the corresponding 1-dimensional
field theory is just the classical mechanics of point particles. In
general, examples of $n$-plectic manifolds include phase spaces
suitable for describing $n$-dimensional classical field theories. We
will be primarily concerned with the $n = 2$ case. This is the first
really new case of $n$-plectic geometry and the corresponding
$2$-dimensional field theories of interest include bosonic string
theory. Indeed, just as the phase space of the classical particle is a
manifold equipped with a closed, nondegenerate 2-form, the phase
space of the classical string is a \textit{finite-dimensional}
manifold equipped with a closed nondegenerate 3-form. This phase
space is often called the `multiphase space' of the string \cite{GIMM}
in order to distinguish it from the infinite-dimensional symplectic
manifolds that are used as phase spaces in string field theory
\cite{Bowick-Rajeev}.

In classical mechanics, the relevant mathematical structures are not
just geometric, but also algebraic. The symplectic form gives the
space of smooth functions the structure of Poisson
algebra. Analogously, in classical string theory, the 2-plectic form
induces a bilinear skew-symmetric bracket on a particular subspace of
differential 1-forms, which we call Hamiltonian. The Hamiltonian
1-forms and smooth functions form the underlying chain complex of an
algebraic structure known as a semistrict Lie 2-algebra.  A semistrict
Lie 2-algebra can be viewed as a categorified Lie algebra in which the
Jacobi identity is weakened and is required to hold only up to
isomorphism. Equivalently, it can be described as a 2-term
$L_{\infty}$-algebra, i.e.\ a generalization of a 2-term differential
graded Lie algebra in which the Jacobi identity is only satisfied up
to chain homotopy \cite{HDA6,LS}.  Just as the Poisson algebra of smooth
functions represents the observables of a system of particles, it has
been shown that the Lie 2-algebra of Hamiltonian 1-forms contains the
observables of the classical string \cite{BHR}. In general, an
$n$-plectic structure will give rise to a $L_{\infty}$-algebra on an
$n$-term chain complex of differential forms in which the $(n-1)$-forms
correspond to the observables of an $n$-dimensional classical field
theory \cite{CR}.
 
Many of the ingredients found in 2-plectic geometry are also found in
the theory of Courant algebroids, which was also developed by
generalizing structures found in symplectic geometry. Courant
algebroids were first used by Courant \cite{Courant} to study
generalizations of pre-symplectic and Poisson structures in the theory
of constrained mechanical systems. Roughly, a Courant algebroid is a
vector bundle that generalizes the structure of a tangent bundle
equipped with a symmetric nondegenerate bilinear form on the
fibers. In particular, the underlying vector bundle of a Courant
algebroid comes equipped with a skew-symmetric bracket on the space
of global sections. However, unlike the Lie bracket of vector fields,
the bracket need not satisfy the Jacobi identity.

In a letter to Weinstein, \v{S}evera \cite{Severa1} described how a
certain type of Courant algebroid, known as an exact Courant
algebroid, appears naturally when studying 2-dimensional variational
problems. In classical string theory, the string can be represented as
a map $\phi \maps \Sigma \to M$ from a 2-dimensional parameter space
$\Sigma$ into a manifold $M$ corresponding to space-time. The image
$\phi(\Sigma)$ is called the string world-sheet. The map $\phi$
extremizes the integral of a 2-form $\theta \in \Omega^{2}(M)$ over
its world-sheet. Hence the classical string is a solution to a
2-dimensional variational problem. The 2-form $\theta$ is called the
Lagrangian and depends on elements of the first jet bundle of the
trivial bundle $\Sigma \times M$. The Lagrangian is not unique. A
solution $\phi$ remains invariant if an exact 1-form or
`divergence' is added to $\theta$. It is, in fact, the 3-form
$d\theta$ that is relevant. In this context, \v{S}evera observed that
the 3-form $d\theta$ uniquely specifies (up to isomorphism) the
structure of an exact Courant algebroid over $M$. The general
correspondence between exact Courant algebroids and closed 3-forms on
the base space was further developed by \v{S}evera, and also by
Bressler and Chervov \cite{Bressler-Chervov}, to give a complete
classification.  An exact Courant algebroid over $M$ is determined up to
isomorphism by its \v{S}evera class: an element $[\omega]$ in the
third de Rham cohomology of $M$.

Just as in 2-plectic geometry, the underlying geometric structure of a
Courant algebroid has an algebraic manifestation. Roytenberg and
Weinstein \cite{Roytenberg-Weinstein} showed that the  bracket
on the space of global sections induces an $L_{\infty}$ structure. If
we are considering an exact Courant algebroid, then the global
sections can be identified with ordered pairs of vector fields and
1-forms on the base space. Roytenberg and Weinstein's results imply
that these sections, when combined with the smooth functions on the base
space, form a semistrict Lie 2-algebra \cite{Sheng-Zhu}. Moreover, the bracket of
the Lie 2-algebra is determined by a closed 3-form corresponding to a
representative of the \v{S}evera class
\cite{Severa-Weinstein}.

Thus there are striking similarities between 2-plectic manifolds and
exact Courant algebroids. Both originate from attempts to generalize
certain aspects of symplectic geometry. Both come equipped with a
closed 3-form that gives rise to a Lie 2-algebra structure on a chain
complex consisting of smooth functions and differential 1-forms. In
this paper, we prove that there is indeed a connection between the
two. We show that any manifold $M$ equipped with a 2-plectic form
$\omega$ gives an exact Courant algebroid $E_{\omega}$ with \v{S}evera
class $[\omega]$, and that there is an embedding of the Lie 2-algebra
of observables into the Lie 2-algebra corresponding to
$E_{\omega}$. Moreover, this embedding allows us to characterize the
Hamiltonian 1-forms as particular infinitesimal symmetries of
$E_{\omega}$ which preserve the 2-plectic structure on $M$.

\section{Courant algebroids}
Here we recall some basic facts and examples of Courant
algebroids and then we proceed to describe \v{S}evera's classification of exact Courant
algebroids. There are several equivalent definitions of a Courant
algebroid found in the literature. In this paper we use the
definition given by Roytenberg \cite{Roytenberg_thesis}.
\begin{definition} \label{courant_algebroid} A {\bf Courant algebroid}
  is a vector bundle $E \to M$ equipped with a nondegenerate symmetric
  bilinear form $\innerprod{\cdot}{\cdot}$ on the bundle, a
  skew-symmetric bracket $\cbrac{\cdot}{\cdot}$ on $\Gamma(E)$, and a
  bundle map (called the {\bf anchor}) $\rho \maps E \to TM$ such that
  for all $ e_{1},e_{2},e_{3}\in \Gamma (E)$ and for all $f,g \in
  \cinf(M)$ the following properties hold:
\begin{enumerate}
\item{$\cbrac{e_1}{\cbrac{e_2}{e_3}} - \cbrac{\cbrac{e_1}{e_2}}{e_3} 
-\cbrac{e_2}{\cbrac{e_1}{e_3}}=-\mathcal{D}T(e_{1},e_{2},e_{3}),$  
}

\item{$\rho([e_{1},e_{2}])=[\rho(e_{1}),\rho(e_{2})]$, \label{axiom2}
} 

\item{$[e_{1},fe_{2}]=f[e_{1},e_{2}]+\rho (e_{1})(f)e_{2}-\half
\langle e_{1},e_{2}\rangle {\mathcal{D}}f$,
}

\item{$\langle {\mathcal{D}}f,{\mathcal{D}}g\rangle =0$,}

\item{$\rho(e_{1})\left(\langle
e_{2},e_{3}\rangle\right) =\langle [e_{1},e_{2}]+\half {\mathcal{D}}\langle
e_{1},e_{2}\rangle ,e_{3}\rangle +\langle e_{2},[e_{1},e_{3}]+\half
{\mathcal{D}}\langle e_{1},e_{3}\rangle \rangle$,}
\end{enumerate}
where $[\cdot,\cdot]$ is the Lie bracket of vector fields,
$\D \maps \cinf(M) \to \Gamma(E)$ is the map defined by $\innerprod{\D f}{e}=\rho(e)f$, and 
\[
T(e_{1},e_{2},e_{3})= \frac{1}{6}
\left(\innerprod{\cbrac{e_1}{e_2}}{e_3} +
\innerprod{\cbrac{e_3}{e_1}}{e_2} + \innerprod{\cbrac{e_2}{e_3}}{e_1} \right). 
\]
\end{definition}
The bracket in Definition \ref{courant_algebroid} is skew-symmetric,
but the first property implies that it needs only to satisfy the
Jacobi identity ``up to $\D T$''. (The notation suggests we think of
this as a boundary.) The function $T$ is often referred to as the
\textbf{Jacobiator}. (When there is no risk of confusion, we shall
refer to the Courant algebroid with underlying vector bundle $E \to M$
as $E$.)

Note that the vector bundle $E$ may be identified with $E^{\ast}$ via the bilinear form
$\innerprod{\cdot}{\cdot}$ and therefore we have the dual map
\[ \rho^{\ast} \maps T^{\ast}M \to E.\] 
Hence the map $\D$ is simply the pullback of the de Rham differential by $\rho^{\ast}$. 

There is an alternate definition given by \v{S}evera \cite{Severa1}
for Courant algebroids which uses a bilinear operation on sections
that satisfies a Jacobi identity but is not skew-symmetric.
\begin{definition}\label{alt_courant_algebroid}
  A {\bf Courant algebroid} is a vector bundle $ E\rightarrow M $
  together with a nondegenerate symmetric bilinear form $ \innerprod
  {\cdot }{\cdot } $ on the bundle, a bilinear operation $ \circ $ on
  $ \Gamma (E) $, and a bundle map $ \rho \maps E\rightarrow TM $ such
  that for all $ e_{1},e_{2},e_{3}\in \Gamma (E)$ and for all $f \in
  \cinf(M)$ the following properties hold:
\begin{enumerate}
\item{$ e_{1}\circ (e_{2}\circ e_{3})=(e_{1}\circ e_{2})\circ e_{3}+e_{2}\circ (e_{1}\circ e_{3})$,}
\item {$\rho (e_{1}\circ e_{2})=[\rho (e_{1}),\rho (e_{2})]$,} 
\item {$ e_{1}\circ fe_{2}=f(e_{1}\circ e_{2})+\rho (e_{1})(f)e_{2}$,}
\item {$ e_{1}\circ e_{1}=\half \D \innerprod {e_{1}}{e_{1}}$,}
\item {$ \rho (e_{1})\left(\innerprod {e_{2}}{e_{3}}\right)=\innerprod {e_{1}\circ
      e_{2}}{e_{3}}+\innerprod {e_{2}}{e_{1}\circ e_{3}}$,}
\end{enumerate}
where $[\cdot,\cdot]$ is the Lie bracket of vector fields, and
$\D \maps \cinf(M) \to \Gamma(E)$ is the map defined by $\innerprod{\D f}{e}=\rho(e)f$.
\end{definition}
The ``bracket'' $\circ$ is related to the bracket given in
Definition \ref{courant_algebroid} by: 
\begin{equation}
 x \circ y = \cbrac{x}{y} + \half \D
 \innerprod{x}{y}. \label{dorfman}
\end{equation}
Roytenberg \cite{Roytenberg_thesis} showed that if $E$ is a Courant
algebroid in the sense of Definition \ref{courant_algebroid} with bracket
$\cbrac{\cdot}{\cdot}$, bilinear form $\innerprod{\cdot}{\cdot}$ and
anchor $\rho$, then $E$ is a Courant algebroid in the sense of Definition
\ref{alt_courant_algebroid} with the same anchor and bilinear form but
with bracket $\circ$ given by Eq. \ref{dorfman}. Unless otherwise
stated, all Courant algebroids mentioned in this paper are Courant
algebroids in the sense of Definition \ref{courant_algebroid}. We
introduced Definition \ref{alt_courant_algebroid} mainly to connect our
results here with previous results in the literature. 

\begin{example} \label{standard}
An important example of a Courant algebroid is the \textbf{standard
  Courant algebroid} $E_{0}=TM \oplus T^{\ast}M$ over any manifold
  $M$ with bracket
\begin{equation} 
\cbrac{(v_{1},\alpha_{1})}{(v_{2},\alpha_{2})}_{0}=
\left([v_{1},v_{2}], \L_{v_{1}}\alpha_{2} - \L_{v_{2}}\alpha_{1} -
\half d\left(\ip{1}\alpha_{2} - \ip{2}\alpha_{1} \right) \right), \label{standard_bracket}
 \end{equation}
and bilinear form
\begin{equation}
\innerprod{(v_{1},\alpha_{1})}{(v_{2},\alpha_{2})}=
\ip{1}\alpha_{2} + \ip{2}\alpha_{1}.\label{standard_innerprod}
\end{equation}
In this case the anchor $\rho \maps E_{0} \to TM$ is the
projection map, and for a function $f \in \cinf(M)$, $\D f = (0,df)$.
\end{example}

The standard Courant algebroid is the prototypical example of
an \textbf{exact Courant algebroid} \cite{Bressler-Chervov}.
\begin{definition} \label{exact} A Courant algebroid $E \to M$ with
  anchor map $\rho \maps E \to TM$ is {\bf exact} iff
\[
0 \to T^{\ast}M \stackrel{\rho^{\ast}}{\to} E \stackrel{\rho}{\to} TM
\to 0\]
is an exact sequence of vector bundles.
\end{definition}

\subsection{The \v{S}evera class of an exact Courant algebroid} \label{class}
\v{S}evera's classification originates in the idea that choosing a
splitting of the above short exact sequence corresponds to defining a
kind of connection.
\begin{definition}\label{connection}
A {\bf connection} on an exact Courant algebroid $E$ over a manifold $M$ is a map of
vector bundles $A \maps TM \to E $ such that
\begin{enumerate}
\item{ $\rho \circ A = \id_{TM}$,}
\item{$\innerprod{A(v_{1})}{A(v_{2})}=0$ for all $v_{1},v_{2} \in TM$,}
\end{enumerate}
where $\rho \maps E \to TM$ and $\innerprod{\cdot}{\cdot}$ are the
anchor and bilinear form, respectively.
\end{definition}
If $A$ is a connection and $\theta \in
\Omega^2(M)$ is a 2-form then one can construct a new connection:
\begin{equation}
\left(A+\theta\right)(v)=A(v)+ \rho^{\ast}\theta(v,\cdot). \label{2-form_action}
\end{equation}
$(A+\theta)$ satisfies the first condition of Definition \ref{connection} since
$\ker \rho=\im \rho^{\ast}$. The second condition follows from the
fact that we have by definition of $\rho^{\ast}$:
\begin{equation}
\innerprod{\rho^{\ast}(\alpha)}{e}=\alpha(\rho(e)) \label{innerprod}
\end{equation}
for all $e \in \Gamma(E)$ and $\alpha \in \Omega^{1}(M)$.
Furthermore, one can show that any two connections on an exact Courant
algebroid must differ (as in Eq.\ \ref{2-form_action}) by a 2-form on
$M$. Hence the space of connections on an exact Courant algebroid
is an affine space modeled on the vector space of 2-forms
$\Omega^{2}(M)$ \cite{Bressler-Chervov}.

The failure of a connection to preserve the bracket gives a suitable
notion of curvature:
\begin{definition}
  If $E$ is an exact Courant algebroid over $M$ with bracket
  $\cbrac{\cdot}{\cdot}$ and $A \maps TM \to E$ is a connection then
  the {\bf curvature} is a map $F \maps TM \times TM \to E$ defined
  by
\[ F(v_{1},v_{2}) = \cbrac{A(v_{1})}{A(v_{2})} - A\left( \left[
    v_{1},v_{2} \right] \right).
\]
\end{definition}
If $F$ is the curvature of a connection $A$ then given $v_{1},v_{2}
\in TM$, it follows from exactness and axiom \ref{axiom2} in Definition
\ref{courant_algebroid} that there exists a 1-form
$\alpha_{v_{1},v_{2}} \in \Omega^1(M)$ such that
$F(v_{1},v_{2})=\rho^{\ast}(\alpha_{v_{1},v_{2}})$. Since $A$ is a
connection, its image is isotropic in $E$. Therefore for any $v_{3}
\in TM$ we have:
\[
\innerprod{F(v_{1},v_{2})}{A(v_{3})} = 
\innerprod{\cbrac{A\left(v_{1}\right)}{A\left(v_{2}\right)}}{A(v_{3})}.
\] 
The above formula allows one to associate the curvature $F$ to a 3-form on $M$:  
\begin{prop}
  Let $E$ be an exact Courant algebroid over a manifold $M$ with
  bracket $\cbrac{\cdot}{\cdot}$ and bilinear form
  $\innerprod{\cdot}{\cdot}$. Let $A \maps TM \to E$ be a connection on $E$. Then
  given vector fields $v_{1}, v_{2}, v_{3}$ on $M$:
\begin{enumerate}
\item{ 
The function
\[
\omega(v_{1},v_{2},v_{3})= \innerprod{\cbrac{A\left(v_{1}\right)}{A\left(v_{2}\right)}}{A(v_{3})}
\] 
defines a closed 3-form on $M$.}
\item{If $\theta \in \Omega^{2}(M)$ is a 2-form and
    $\tilde{A}=A+\theta$ then
\begin{align*}
\tilde{\omega}(v_{1},v_{2},v_{3}) &=
\innerprod{\llbracket
  \tilde{A}\left(v_{1}\right),\tilde{A}\left(v_{2}\right) \rrbracket}{\tilde{A}(v_{3})}\\
&=\omega(v_{1},v_{2},v_{3}) + d\theta(v_{1},v_{2},v_{3}).
\end{align*}
}
\end{enumerate}
\end{prop}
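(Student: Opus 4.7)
The plan is to handle part (1) in two stages---first verifying that $\omega$ is a tensorial, alternating 3-form, then showing it is closed---and part (2) by direct expansion. The key tools throughout are isotropy of $A(TM)$ (Definition \ref{connection}), the relations $\rho \circ A = \id_{TM}$ and $\rho \circ \rho^{\ast} = 0$, the identity $\innerprod{\rho^{\ast}\alpha}{e} = \alpha(\rho(e))$ from Eq.\ \ref{innerprod}, and axioms 1, 3, and 5 of Definition \ref{courant_algebroid}. For $\cinf(M)$-linearity of $\omega$ in its second slot, I would expand via axiom 3: the Leibniz-correction $\rho(A(v_1))(f)A(v_2) - \half\innerprod{A(v_1)}{A(v_2)}\D f$ contributes $v_1(f)\innerprod{A(v_2)}{A(v_3)} - \half\innerprod{A(v_1)}{A(v_2)}\innerprod{\D f}{A(v_3)}$ when paired with $A(v_3)$, and both pieces vanish by isotropy. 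Skew-symmetry in $v_1, v_2$ is immediate from the Courant bracket; for $v_2 \leftrightarrow v_3$, axiom 5 applied to $e_i = A(v_i)$ reduces (using isotropy to kill the LHS and every $\D$-correction) to $0 = \omega(v_1,v_2,v_3) + \omega(v_1,v_3,v_2)$. Combined, these give $\omega \in \Omega^3(M)$.

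To verify $d\omega = 0$, I would apply the Cartan formula to $d\omega(v_0,v_1,v_2,v_3)$ and expand each $v_i$-derivative $v_i\omega(v_j,v_k,v_l)$ using axiom 5 with $e_1 = A(v_i)$, $e_2 = \cbrac{A(v_j)}{A(v_k)}$, and $e_3 = A(v_l)$. This produces doubly-nested bracket pairings $\innerprod{\cbrac{A(v_i)}{\cbrac{A(v_j)}{A(v_k)}}}{A(v_l)}$, symmetric pairings $\innerprod{\cbrac{A(v_j)}{A(v_k)}}{\cbrac{A(v_i)}{A(v_l)}}$, and $\half v_l\,\omega(v_i,v_j,v_k)$-corrections arising from the nonvanishing $\half\D\innerprod{A(v_i)}{\cbrac{A(v_j)}{A(v_k)}}$-term (its partner $\half\D\innerprod{A(v_i)}{A(v_l)}$ vanishes by isotropy). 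The Jacobi-up-to-$\D T$ identity of axiom 1, paired with $A(v_l)$, collapses the alternating sum of nested brackets into expressions of the form $v_l T(A(v_i),A(v_j),A(v_k)) = \half v_l\,\omega(v_i,v_j,v_k)$, where the identity $T(A(v_1),A(v_2),A(v_3)) = \half\omega(v_1,v_2,v_3)$ follows from the skew-symmetry of $\omega$. The Cartan contributions $\omega([v_i,v_j],v_k,v_l)$ arise from writing $\cbrac{A(v_i)}{A(v_j)} = A([v_i,v_j]) + F(v_i,v_j)$ with $F(v_i,v_j) \in \rho^{\ast}\Omega^1(M)$ and using $\innerprod{\rho^{\ast}\alpha}{A(v)} = \alpha(v)$. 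A careful sign match then forces total cancellation.

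For part (2), substitute $\tilde{A}(v) = A(v) + \rho^{\ast}\theta(v,\cdot)$ and expand $\tilde{\omega}$ bilinearly into eight pieces. The pure-$A$ part recovers $\omega(v_1,v_2,v_3)$. Since $\rho \circ \rho^{\ast} = 0$, axiom 3 shows that $\cbrac{\rho^{\ast}\theta_i}{\rho^{\ast}\theta_j}$ together with the associated $\D$-correction lands in $\rho^{\ast}\Omega^{1}(M)$, and these pair trivially against other $\rho^{\ast}$-terms by isotropy. The remaining cross-terms are evaluated by axiom 5 together with $\innerprod{\rho^{\ast}\alpha}{A(v)} = \alpha(v)$ and $\rho(A(v)) = v$; the surviving terms reassemble precisely into the Cartan formula for $d\theta(v_1,v_2,v_3)$. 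The main difficulty is the sign bookkeeping in the closedness step, where the alternating structure of $d\omega$ must be married to the cyclic structure of the Jacobiator; once isotropy clears away the spurious terms, what remains is organizational rather than conceptual.
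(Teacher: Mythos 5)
Your proposal is correct in outline but takes a genuinely different route from the paper. The paper does not carry out this verification at all: its proof consists of citing Lemmas 4.2.6, 4.2.7 and 4.3.4 of Bressler and Chervov \cite{Bressler-Chervov}, and the only content it supplies is the observation that, since the image of a connection is isotropic, Eq.\ \ref{dorfman} gives $A(v_1)\circ A(v_2)=\cbrac{A(v_1)}{A(v_2)}$, so the curvature $3$-form defined there via the non-skew-symmetric bracket coincides with $\omega$ and the cited results transfer verbatim. You instead propose a self-contained verification from the axioms of Definition \ref{courant_algebroid}, and the ingredients you assemble are the right ones: isotropy plus axiom 3 for tensoriality, axiom 5 (with the $\D$-corrections killed by isotropy) for antisymmetry in the last two slots, axiom 1 paired against $A(v_l)$ together with $T(A(v_1),A(v_2),A(v_3))=\half\,\omega(v_1,v_2,v_3)$ for closedness, and the decomposition $\cbrac{A(v_i)}{A(v_j)}=A([v_i,v_j])+\rho^{\ast}(\iota_{v_j}\iota_{v_i}\omega)$ to produce the $\omega([v_i,v_j],v_k,v_l)$ terms of the Cartan formula. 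What your approach buys is independence from the bracket convention of \cite{Bressler-Chervov}; what it costs is precisely the ``careful sign match'' you defer, which is the bulk of the work and is exactly what the paper chose to outsource.

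Two points to patch before the sketch becomes a proof. First, in part (2) the piece $\innerprod{\cbrac{\rho^{\ast}(\iota_{v_1}\theta)}{\rho^{\ast}(\iota_{v_2}\theta)}}{A(v_3)}$ is not disposed of by your stated reason (``pairs trivially against other $\rho^{\ast}$-terms by isotropy''), since its second argument is $A(v_3)$ rather than a $\rho^{\ast}$-term, and membership of the bracket in $\im\rho^{\ast}$ does not by itself kill a pairing with $A(v_3)$. You need the stronger fact that $\cbrac{\rho^{\ast}\alpha}{\rho^{\ast}\beta}=0$: by axiom 2 and exactness, $\cbrac{\rho^{\ast}\alpha}{e}+\half\D\innerprod{\rho^{\ast}\alpha}{e}$ lies in $\im\rho^{\ast}$ for every $e$, so axiom 5 applied with $e_1=\rho^{\ast}\alpha$ (whose anchor vanishes) yields $\innerprod{\cbrac{\rho^{\ast}\alpha}{\rho^{\ast}\beta}}{e_3}=0$ for all $e_3$, and nondegeneracy finishes it. Second, the closure of $\im\rho^{\ast}$ under bracketing follows from axiom 2 together with exactness, not from axiom 3 as you state. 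Neither point undermines the viability of the argument.
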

\begin{proof}
  The statements in the proposition are proven in Lemmas 4.2.6, 4.2.7,
  and 4.3.4 in the paper by Bressler and Chervov
  \cite{Bressler-Chervov}.  In their work they define a Courant
  algebroid using Definition \ref{alt_courant_algebroid}, and therefore
  their bracket satisfies the Jacobi identity, but is not
  skew-symmetric.  In our notation, their definition of the curvature
  3-form is:
\[
\omega^{\prime}(v_{1},v_{2},v_{3})= \innerprod{A\left(v_{1}\right)
  \circ A\left(v_{2}\right)}{A(v_{3})}.
\] 
In particular they show that $\circ$
satisfying the Jacobi identity implies $\omega^{\prime}$ is
closed. The Jacobiator corresponding to the Courant bracket is
non-trivial in general. However the isotropicity of the connection and
Eq.\ \ref{dorfman} imply
\[
A(v_1) \circ A(v_2)  = \cbrac{A(v_1)}{A(v_2)} \quad \forall
   v_1, v_2 \in TM.
\]
Hence $\omega^{\prime}=\omega$, so all the needed results in
\cite{Bressler-Chervov} apply here.
\end{proof}

Thus the above  proposition implies that the curvature 3-form of an exact
Courant algebroid over $M$ gives a well-defined cohomology class in
$H^{3}_{\mathrm{DR}}(M)$, independent of the choice of
connection.   

\subsection{Twisting the Courant bracket} \label{twist}
The previous section describes how to go from exact Courant algebroids
to closed 3-forms. Now we describe the reverse
process. In Example \ref{standard} we showed that one can define the
standard Courant algebroid $E_{0}$ over any manifold $M$. The total
space is the direct sum $TM \oplus T^{\ast}M$, the bracket and
bilinear form are given in Eqs.\ \ref{standard_bracket} and
\ref{standard_innerprod}, and the anchor is simply the projection.
The inclusion $A(v)=(v,0)$ of the tangent bundle into
the direct sum is obviously a connection on $E_{0}$ and it is easy to
see that the standard Courant algebroid has zero curvature.

\v{S}evera and Weinstein \cite{Severa1,Severa-Weinstein} observed that
the bracket on $E_{0}$ could be twisted by a closed 3-form
$\omega \in \Omega^3(M)$ on the base: 
\[
\cbrac{(v_{1},\alpha_{1})}{(v_{2},\alpha_{2})}_{\omega}=
\cbrac{(v_{1},\alpha_{1})}{(v_{2},\alpha_{2})}_{0} +
\omega(v_{1},v_{2},\cdot). \label{twisted_bracket} 
\]
This gives a new Courant algebroid $E_{\omega}$ with the same anchor
and bilinear form. Using Eqs.\ \ref{standard_bracket} and
\ref{standard_innerprod} we can compute the curvature 3-form of this
new Courant algebroid:
\begin{align*}
\innerprod{\cbrac{A\left(v_{1}\right)}{A\left(v_{2}\right)}}{A(v_{3})}
&=\innerprod{\cbrac{(v_{1},0)}{(v_{2},0)}}{(v_{3},0)}\\
&=\innerprod{\left([v_{1},v_2],\omega(v_{1},v_{2},\cdot)\right)}{(v_{3},0)}\\
&=\omega(v_{1},v_{2},v_{3}),
\end{align*}
and we see that $E_{\omega}$ is an exact Courant algebroid over $M$ with
\v{S}evera class $[\omega]$.

\section{$2$-plectic geometry}
We now give a brief overview of $2$-plectic geometry. More details
including motivation for several of the definitions presented here can be found
in our previous work with Baez and Hoffnung \cite{BHR,BR}.
\begin{definition}
\label{2-plectic_def}
A $3$-form $\omega$ on a $C^\infty$ manifold $M$ is 
{\bf 2-plectic}, or more specifically
a {\bf 2-plectic structure}, if it is both closed:
\[
    d\omega=0,
\]
and nondegenerate:
\[
    \forall v \in T_{x}M,\ \iota_{v} \omega =0 \Rightarrow v =0
\]
If $\omega$ is a $2$-plectic form on $M$ we call the pair $(M,\omega)$ 
a {\bf 2-plectic manifold}.
\end{definition}

The $2$-plectic structure induces an injective map from the
space of vector fields on $M$ to the space of 2-forms on $M$. This leads
us to the following definition:

\begin{definition} \label{hamiltonian}
Let $(M,\omega)$ be a $2$-plectic manifold.  A 1-form $\alpha$ on $M$ 
is {\bf Hamiltonian} if there exists a vector field $v_\alpha$ on $M$ such that
\[
d\alpha= -\ip{\alpha} \omega.
\]
We say $v_\alpha$ is the {\bf Hamiltonian vector field} corresponding to $\alpha$. 
The set of Hamiltonian 1-forms and the set of Hamiltonian vector
fields on a $2$-plectic manifold are both vector spaces and are denoted
as $\ham$ and $\Vect\left(M \right)$, respectively.
\end{definition}

The Hamiltonian vector field $v_\alpha$ is unique if
it exists, but note there may
be 1-forms $\alpha$ having no Hamiltonian vector field.  
Furthermore, two distinct Hamiltonian 1-forms may differ by a closed
1-form and therefore share the same Hamiltonian vector field.

We can generalize the Poisson bracket of functions in symplectic geometry by  
defining a bracket of Hamiltonian 1-forms. 
\begin{definition}
\label{semi-bracket.defn}
Given $\alpha,\beta\in \ham$, the {\bf bracket} $\brac{\alpha}{\beta}$
is the 
\break
1-form given by 
\[  \brac{\alpha}{\beta} = \ip{\beta}\ip{\alpha}\omega .\]
\end{definition}

\begin{prop}\label{semi-bracket} Let $\alpha,\beta,\gamma \in \ham$ and let
$v_\alpha,v_\beta,v_\gamma$ be the respective Hamiltonian
vector fields.  The bracket $\brac{\cdot}{\cdot}$ has the following properties:
  \begin{enumerate}
\item The bracket of Hamiltonian forms is Hamiltonian:
  \begin{eqnarray}
    d\brac{\alpha}{\beta} = -\iota_{[v_\alpha,v_\beta]} \omega,
  \end{eqnarray}
so in particular we have 
\[     v_{\brac{\alpha}{\beta}} = [v_\alpha,v_\beta]  .\]
\item The bracket is skew-symmetric: 
  \begin{eqnarray}
    \brac{\alpha}{\beta} = -\brac{\beta}{\alpha}
  \end{eqnarray}

\item The bracket satisfies the Jacobi identity up to an exact 1-form:
\begin{eqnarray}
    \brac{\alpha}{\brac{\beta}{\gamma}} -
    \brac{\brac{\alpha}{\beta}}{\gamma} 
    -\brac{\beta}{\brac{\alpha}{\gamma}} =dJ_{\alpha,\beta,\gamma} 
  \end{eqnarray}
with $J_{\alpha,\beta,\gamma}=\ip{\alpha}\ip{\beta}\ip{\gamma}\omega$.
\end{enumerate}
\end{prop}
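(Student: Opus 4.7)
The plan is to establish the three claims in sequence, relying throughout on Cartan's magic formula $\L_v = d\iota_v + \iota_v d$, the commutator identity $[\L_v, \iota_w] = \iota_{[v,w]}$, the closedness $d\omega = 0$, and the defining equation $d\alpha = -\ip{\alpha}\omega$. Applying $d$ to the last equation gives $d(\ip{\alpha}\omega) = 0$, and Cartan's formula together with $d\omega = 0$ then yields $\L_{v_\alpha}\omega = 0$; this vanishing identity will be used repeatedly. Claim (2) is essentially free, since $\brac{\alpha}{\beta} = \ip{\beta}\ip{\alpha}\omega$ and interior products anticommute.

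For claim (1), I would expand $d\brac{\alpha}{\beta} = d\ip{\beta}\ip{\alpha}\omega$ using Cartan's formula as $\L_{v_\beta}(\ip{\alpha}\omega) - \ip{\beta}\,d(\ip{\alpha}\omega)$. The second term vanishes since $\ip{\alpha}\omega = -d\alpha$ is exact. The first term rearranges, via the commutator identity, into $\ip{\alpha}(\L_{v_\beta}\omega) + \iota_{[v_\beta,v_\alpha]}\omega$, and the $\L_{v_\beta}\omega$ piece is zero. What remains is $-\iota_{[v_\alpha,v_\beta]}\omega$, which is the first assertion of (1); nondegeneracy of $\omega$ then produces the Hamiltonian vector field statement $v_{\brac{\alpha}{\beta}} = [v_\alpha,v_\beta]$ by uniqueness.

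Claim (3) is the substantive step. The strategy is to compute $dJ_{\alpha,\beta,\gamma} = d(\ip{\alpha}\ip{\beta}\ip{\gamma}\omega)$ by peeling off interior products one at a time: apply Cartan's formula to $d\ip{\alpha}$, then use the commutator identity to push $\L_{v_\alpha}$ through $\ip{\beta}$ and $\ip{\gamma}$, killing every $\L_{v_\alpha}\omega$ factor that appears. The surviving exterior-derivative term $\ip{\alpha}d(\ip{\beta}\ip{\gamma}\omega)$ can then be collapsed using claim (1), which identifies $d(\ip{\beta}\ip{\gamma}\omega) = -d\brac{\gamma}{\beta}$ with $\iota_{[v_\beta,v_\gamma]}\omega$. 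The result is a three-term sum of the form $\iota_{[v_i,v_j]}\iota_{v_k}\omega$. Independently, each nested bracket on the left-hand side of the Jacobi expression rewrites, via claim (1) applied to the inner bracket, into precisely such an expression; matching terms (after reordering interior products using anticommutativity) shows that both sides agree.

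The principal obstacle is bookkeeping rather than conceptual: part (3) involves three applications of the commutator identity interleaved with anticommutations of interior products, and a single misplaced sign in any of the three summands will break the cancellation. I would control this by fixing a canonical order for every iterated interior product (placing the commutator-derived vector field outermost), rewriting both sides of the target identity into that normal form before comparing them term by term.
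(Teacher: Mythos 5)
Your argument is correct and is essentially the standard Cartan-calculus proof; the paper itself simply defers to Proposition 3.7 of \cite{BHR}, where the same computation is carried out using exactly the ingredients you name ($\L_{v_\alpha}\omega = 0$, Cartan's formula, and $[\L_v,\iota_w]=\iota_{[v,w]}$), with part (1) then feeding into part (3) just as you describe. The only blemish is a harmless sign slip in your sketch of part (3): since $\iota_{v_\beta}\iota_{v_\gamma}\omega = \brac{\gamma}{\beta}$, one has $d(\iota_{v_\beta}\iota_{v_\gamma}\omega) = +\,d\brac{\gamma}{\beta}$ rather than $-\,d\brac{\gamma}{\beta}$, and by part (1) this equals $-\iota_{[v_\gamma,v_\beta]}\omega = \iota_{[v_\beta,v_\gamma]}\omega$, which is the identification you ultimately use, so the endpoint of your computation is right.
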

\begin{proof}
See Proposition 3.7 in \cite{BHR}.
\end{proof}

\section{Lie $2$-algebras}
Both the Courant bracket and the bracket on Hamiltonian 1-forms are,
roughly, Lie brackets which satisfy the Jacobi identity up to an exact
1-form. This leads us to the notion of a Lie $2$-algebra: a category
equipped with structures analogous to those of a Lie algebra, for
which the usual laws involving skew-symmetry and the Jacobi identity
hold up to isomorphism \cite{HDA6,Roytenberg_L2A}. A Lie 2-algebra in
which the isomorphisms are actual equalities is called a strict Lie
2-algebra. A Lie 2-algebra in which the laws governing skew-symmetry
are equalities but the Jacobi identity holds only up to isomorphism is
called a semistrict Lie 2-algebra.

Here we define a semistrict Lie $2$-algebra to be a $2$-term chain
complex of vector spaces equipped with structures analogous to those of a
Lie algebra, for which the usual laws hold up to chain homotopy. In
this guise, a semistrict Lie $2$-algebra is nothing more than a
$2$-term $L_{\infty}$-algebra. For more details, we refer the reader
to the work of Lada and Stasheff \cite{LS}, and the work of Baez and
Crans \cite{HDA6}.
\begin{definition}
A {\bf semistrict Lie 2-algebra} is a $2$-term chain complex of vector spaces
$L = (L_1\stackrel{d}\rightarrow L_0)$ equipped with:
\begin{itemize}
\item a chain map $\blankbrac\maps L \tensor L\to L$ called the {\bf
bracket};
\item an antisymmetric chain homotopy $J \maps L \tensor L \tensor L
  \to L$ 
from the chain map
\[     \begin{array}{ccl}  
     L \tensor L \tensor L & \to & L   \\
     x \tensor y \tensor z & \longmapsto & [x,[y,z]],  
  \end{array}
\]
to the chain map
\[     \begin{array}{ccl}  
     L \tensor L \tensor L& \to & L   \\
     x \tensor y \tensor z & \longmapsto & [[x,y],z] + [y,[x,z]]  
  \end{array}
\]
called the {\bf Jacobiator},
\end{itemize}
such that the following equation holds:
\begin{equation}
\begin{array}{c}
  [x,J(y,z,w)] + J(x,[y,z],w) +
  J(x,z,[y,w]) + [J(x,y,z),w] \\ + [z,J(x,y,w)] 
  = J(x,y,[z,w]) + J([x,y],z,w) \\ + [y,J(x,z,w)] + J(y,[x,z],w) + J(y,z,[x,w]).
\end{array}
\end{equation}
\end{definition}

We will also need a suitable notion of morphism: 
\begin{definition}
\label{homo}
Given semistrict Lie $2$-algebras $L$ and $L'$ with bracket and 
Jacobiator $\blankbrac$, $J$ and $\blankbrac^\prime$,
$J^\prime$ respectively, a {\bf homomorphism} from $L$ to $L'$
consists of:
\begin{itemize}
\item{a chain map $\phi=\left(\phi_{0},\phi_{1}\right) \maps L \to L'$, and}
\item{a chain homotopy $\phi_{2} \maps L \tensor L \to L$ from the chain
  map
\[     \begin{array}{ccl}  
     L \tensor L & \to & L   \\
     x \tensor y & \longmapsto & \left [ \phi(x),\phi(y) \right]^{\prime},
  \end{array}
\]
to the chain map
\[     \begin{array}{ccl}  
     L \tensor L & \to & L   \\
     x \tensor y & \longmapsto & \phi \left( [x,y] \right)
  \end{array}
\]
}
\end{itemize}
such that the following equation holds:
\begin{equation} \label{coherence}
\begin{array}{l}
 J^{\prime}(\phi_0(x),\phi_0(y), \phi_0(z))-\phi_1(J(x,y,z)) = \\
\phi_2(x,[y,z]) -\phi_2([x,y],z) - \phi_2(y,[x,z]) - [\phi_2(x,y),\phi_0(z)]^{\prime}\\
+ [\phi_0(x), \phi_2(y,z)]^{\prime}- [\phi_0(y),\phi_2(x,z)]^{\prime}.
\end{array}
\end{equation}
\end{definition}
This definition is equivalent to the definition of a morphism between
$2$-term $L_{\infty}$-algebras. (The same definition is given
in \cite{HDA6}, but it contains a typographical error.)

\subsection{The Lie $2$-algebra from a $2$-plectic manifold}
Given a $2$-plectic manifold $(M,\omega)$, we can 
construct a semistrict Lie $2$-algebra. The underlying $2$-term chain complex is 
namely:
\[
L \quad = \quad 
\cinf(M)  \stackrel{d}{\rightarrow} \ham  
\]
where $d$ is the usual exterior derivative of functions.  
This chain complex is well-defined, since 
any exact form is Hamiltonian, with $0$ as its Hamiltonian vector
field. We can construct a chain map 
\[      \brac{\cdot}{\cdot} \maps L \otimes L \to L ,\]
by extending the bracket on $\ham$ trivially to $L$.
In other words, in degree $0$, the chain map is given as in 
Definition \ \ref{semi-bracket.defn}: 
\[  \brac{\alpha}{\beta} = \ip{\beta}\ip{\alpha}\omega, \]
and in degrees $1$ and $2$, we set it equal to zero:
\[  \brac{\alpha}{f} = 0, \qquad \brac{f}{\alpha} = 0, \qquad
    \brac{f}{g} = 0.    \]
The precise construction of this Lie $2$-algebra is 
given in the following theorem:
\begin{theorem}
\label{semistrict}
If $(M,\omega)$ is a $2$-plectic manifold, there is a 
semistrict Lie $2$-algebra $L(M,\omega)$ where:
\begin{itemize}
\item the space of 0-chains is $\ham$,
\item the space of 1-chains is $\cinf$,
\item the differential is the exterior derivative $d \maps \cinf \to \ham$,
\item the bracket is $\brac{\cdot}{\cdot}$,
\item the Jacobiator is the linear map $J_{L}\maps \ham \tensor \ham \tensor 
\ham \to \cinf$ defined by $J_{L}(\alpha,\beta,\gamma) = \ip{\alpha}\ip{\beta}\ip{\gamma}\omega$.
\end{itemize}
\end{theorem}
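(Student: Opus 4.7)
The plan is to verify in turn that the proposed data satisfy the three requirements in the definition of a semistrict Lie $2$-algebra: that $\brac{\cdot}{\cdot}$ is a well-defined chain map, that $J_L$ is an antisymmetric chain homotopy between the claimed pair of chain maps, and that the coherence pentagon holds.

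For the chain map property, the only nontrivial piece is the degree $0$ component, where Proposition \ref{semi-bracket}(1) shows that $\brac{\alpha}{\beta} = \ip{\beta}\ip{\alpha}\omega$ is Hamiltonian, so $\brac{\cdot}{\cdot}$ indeed lands in $L_0$. Compatibility with the differential, written $d\brac{x}{y} = \brac{dx}{y} + (-1)^{|x|}\brac{x}{dy}$, reduces in the mixed degree case $\brac{f}{\alpha}$ to the claim that $\brac{df}{\alpha} = \ip{\alpha}\ip{v_{df}}\omega = 0$; this holds because $d(df)=0$ combined with the nondegeneracy of $\omega$ forces $v_{df}=0$, so any exact $1$-form has vanishing Hamiltonian vector field. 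The remaining mixed-degree components are trivially zero on both sides.

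Antisymmetry of $J_L$ is immediate from the antisymmetry of repeated interior products. To see that $J_L$ is a chain homotopy from $(x,y,z) \mapsto \brac{x}{\brac{y}{z}}$ to $(x,y,z)\mapsto\brac{\brac{x}{y}}{z} + \brac{y}{\brac{x}{z}}$, we again treat the nontrivial degree separately: at total degree $0$ the condition reads $d J_L(\alpha,\beta,\gamma) = \brac{\alpha}{\brac{\beta}{\gamma}} - \brac{\brac{\alpha}{\beta}}{\gamma} - \brac{\beta}{\brac{\alpha}{\gamma}}$, which is precisely Proposition \ref{semi-bracket}(3). In all higher degrees $J_L$ vanishes (its codomain would be $L_2=0$), and the required equality reduces via the same $v_{df}=0$ observation to a trivial identity.

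The main obstacle is verifying the coherence equation. A crucial simplification occurs first: because the bracket is zero whenever one argument lies in $L_1$, every term of the form $[\cdot, J_L(\cdot,\cdot,\cdot)]$ in the coherence identity vanishes, so the equation collapses to the identity
\[
J_L(x,[y,z],w) + J_L(x,z,[y,w]) \;=\; J_L(x,y,[z,w]) + J_L([x,y],z,w) + J_L(y,[x,z],w) + J_L(y,z,[x,w])
\]
for Hamiltonian $1$-forms $x,y,z,w$. Using Proposition \ref{semi-bracket}(1), $v_{\brac{\alpha}{\beta}} = [v_\alpha,v_\beta]$, this becomes a purely differential-geometric identity about iterated interior products of $\omega$ with vector fields. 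Here the hard part is to establish this identity. My approach is to apply the standard commutator relation $\iota_{[u,v]} = \L_u \iota_v - \iota_v \L_u$ together with Cartan's formula $\L_u = d\iota_u + \iota_u d$ and, crucially, $d\omega = 0$, to rewrite every $\iota_{[v_\cdot,v_\cdot]}\omega$ term in terms of $\iota$ and $d$ applied to $\omega$; the six terms should then rearrange, after exploiting $d\omega=0$ to eliminate stray $d$'s, into a telescoping identity. In effect this is the $l_4 = 0$ relation of a $2$-term $L_\infty$-algebra, and the computation is the higher analogue of the one in Proposition \ref{semi-bracket}(3), for which we refer to the corresponding result in \cite{BHR}.
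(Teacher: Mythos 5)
Your proposal is correct, and it is in fact a genuine proof where the paper offers only the citation ``See Theorem 4.4 in \cite{BHR}.'' All of your reductions are valid: the chain-map property in mixed degrees does come down to $v_{df}=0$ (exactness of $df$ plus nondegeneracy of $\omega$); the chain-homotopy condition in degree $0$ is exactly part (3) of Proposition \ref{semi-bracket} and is vacuous in higher degrees because $L_2=0$ and $v_{df}=0$; and since the bracket kills anything in $L_1$, the coherence law does collapse to the single identity you state, which after using the antisymmetry of $J_L$ and $v_{\brac{\alpha}{\beta}}=[v_\alpha,v_\beta]$ is the $(2,2)$-unshuffle sum
\[
S \;=\; \omega([v_1,v_2],v_3,v_4)-\omega([v_1,v_3],v_2,v_4)+\omega([v_1,v_4],v_2,v_3)+\omega([v_2,v_3],v_1,v_4)-\omega([v_2,v_4],v_1,v_3)+\omega([v_3,v_4],v_1,v_2)\;=\;0
\]
for Hamiltonian vector fields $v_1,\dots,v_4$. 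The one soft spot is that you only assert this final identity ``should rearrange'' and then defer to \cite{BHR}; since the paper defers the entire theorem to the same source, this is not a gap, but the argument closes cleanly as follows, and it is worth recording because it shows exactly where both hypotheses enter. First, every Hamiltonian vector field preserves $\omega$, since $\lie{\alpha}{\omega}=d\ip{\alpha}\omega+\ip{\alpha}d\omega=-dd\alpha=0$; expanding $\lie{\alpha}{\omega}=0$ in each of the four directional derivatives $v_i\,\omega(\dots)$ shows that the alternating sum $D=v_1\omega(v_2,v_3,v_4)-v_2\omega(v_1,v_3,v_4)+v_3\omega(v_1,v_2,v_4)-v_4\omega(v_1,v_2,v_3)$ equals $2S$. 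Second, the coordinate-free formula for $d\omega$ on four vector fields together with $d\omega=0$ gives $D=S$. Hence $S=2S=0$. So the identity uses closedness twice (once via $\L_{v_\alpha}\omega=0$, once via the $d\omega$ expansion) together with the Hamiltonian condition $\ip{\alpha}\omega=-d\alpha$, confirming that your emphasis on $d\omega=0$ is well placed; with this supplied, your proof is complete and self-contained apart from the appeal to Proposition \ref{semi-bracket}.
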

\begin{proof}
See Theorem 4.4 in \cite{BHR}.
\end{proof}
\subsection{The Lie $2$-algebra from a Courant algebroid}
Given any Courant algebroid $E \rightarrow M$ with bilinear form
$\innerprod{\cdot}{\cdot}$, bracket $\cbrac{\cdot}{\cdot}$, and anchor
$\rho \maps E \to TM$, we can construct a
$2$-term chain complex 
\[
C \quad = \quad \cinf(M) \stackrel{\D}{\rightarrow} \Gamma(E),
\]
with differential $\D=\rho^{\ast}d$. The bracket
$\cbrac{\cdot}{\cdot}$ on global sections can be extended to a chain
map $\cbrac{\cdot}{\cdot} \maps C \tensor C \to C$. If $e_1,e_2$ are
degree 0 chains then $\cbrac{e_{1}}{e_{2}}$ is the original bracket.
If $e$ is a degree 0 chain and $f,g$ are degree 1 chains, then we
define:
\begin{align*}
\cbrac{e}{f} &= -\cbrac{f}{e} = \half \innerprod{e}{\D f}  \\
\cbrac{f}{g}&=0.
\end{align*}   
This extended bracket gives a semistrict Lie $2$-algebra on the complex $C$:
\begin{theorem}\label{courant_L2A}
If $E$ is a Courant algebroid, there is a semistrict Lie $2$-algebra  
$C(E)$ where:
\begin{itemize}
\item the space of 0-chains is $\Gamma(E)$,
\item the space of 1-chains is $\cinf(M)$,
\item the differential the map $\D \maps \cinf(M) \to \Gamma(M)$,
\item the bracket is $\cbrac{\cdot}{\cdot}$,
\item the Jacobiator is the linear map $J_{C}\maps \Gamma(M) \tensor \Gamma(M) \tensor 
\Gamma(M) \to \cinf(M)$ defined by
\begin{align*} 
J_{C}(e_{1},e_{2},e_{3}) &= -T(e_{1},e_{2},e_{3})\\
&=-\frac{1}{6}
\left(\innerprod{\cbrac{e_1}{e_2}}{e_3} +
\innerprod{\cbrac{e_3}{e_1}}{e_2} + \innerprod{\cbrac{e_2}{e_3}}{e_1} \right). 
\end{align*}
\end{itemize}
\end{theorem}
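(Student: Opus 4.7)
The plan is to verify the three defining properties of a semistrict Lie 2-algebra directly from the axioms of Definition \ref{courant_algebroid}. Namely: (a) the extended bracket $\cbrac{\cdot}{\cdot}$ is a chain map; (b) $J_{C}=-T$ is a skew-symmetric chain homotopy from the ``nested'' triple bracket to its Leibniz rearrangement; and (c) $J_{C}$ satisfies the coherence equation. One could alternatively deduce this theorem from the Roytenberg--Weinstein $L_{\infty}$-structure \cite{Roytenberg-Weinstein} together with the equivalence between 2-term $L_{\infty}$-algebras and semistrict Lie 2-algebras \cite{HDA6}, but a direct derivation makes explicit which Courant axiom powers each step, and parallels the proof of Theorem \ref{semistrict}.

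For (a), skew-symmetry is built into the definition of the extended bracket, and the only nontrivial check is that $\D\cbrac{e}{f}=\cbrac{e}{\D f}$ for $e\in\Gamma(E)$ and $f\in\cinf(M)$, i.e.
\[
\cbrac{e}{\D f} = \half \D\innerprod{e}{\D f}.
\]
This follows from axiom (5) applied with $e_{2}=\D f$, together with the identity $\rho\circ\D=0$; the latter is a consequence of axiom (4), since $0=\innerprod{\D g}{\D f}=\rho(\D g)(f)$ for every $f$ forces $\rho\D g=0$. For (b), skew-symmetry of $J_{C}$ is immediate from the cyclic form of $T$, and the chain homotopy condition on three degree-$0$ inputs is precisely axiom (1) rewritten in terms of $J_{C}=-T$. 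On mixed-degree inputs, where one argument is a function, the Jacobiator vanishes, so the homotopy condition reduces to the identity $\cbrac{e_{1}}{\cbrac{e_{2}}{f}} - \cbrac{\cbrac{e_{1}}{e_{2}}}{f} - \cbrac{e_{2}}{\cbrac{e_{1}}{f}} = 0$, which after unwinding via $\cbrac{e}{f}=\half\innerprod{e}{\D f}$ becomes a consequence of axioms (2) and (5).

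The principal obstacle is the coherence equation. After substituting $J_{C}=-T$ and expanding each of the ten terms via the bracket and bilinear form, the identity becomes an alternating sum of expressions of the form $\innerprod{\cbrac{\cbrac{e_{i}}{e_{j}}}{e_{k}}}{e_{l}}$. The plan is to apply axiom (5) to rewrite anchor derivatives of inner products as double-bracketed inner products, and then axioms (1) and (2) to reshuffle nested brackets, aiming for pairwise cancellation. This step is essentially the $L_{\infty}$ Jacobi identity at level four; its algebraic content is fully determined by axioms (1), (2), and (5), but the bookkeeping of signs and cyclic permutations is the main technical hurdle.
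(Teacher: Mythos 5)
Your route is genuinely different from the paper's: the paper does not verify the semistrict Lie $2$-algebra axioms directly, but restricts the Roytenberg--Weinstein $L_{\infty}$-structure on $0 \to \ker\D \to \cinf(M) \to \Gamma(E)$ to the two-term complex $C$ and invokes the equivalence between $2$-term $L_{\infty}$-algebras and semistrict Lie $2$-algebras from \cite{HDA6}. A direct verification is a legitimate alternative, and your part (a) is essentially right: $\cbrac{e}{\D f}=\half\D\innerprod{e}{\D f}$ does follow from axioms (2), (4), (5) and nondegeneracy (you should also record the degree-$2$ check $\cbrac{\D f}{g}-\cbrac{f}{\D g}=\innerprod{\D f}{\D g}=0$, which again uses axiom (4)). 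The degree-$0$ case of (b) is indeed exactly axiom (1).

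There is, however, a concrete error in your mixed-degree check in (b). The chain homotopy identity on an input of total degree $1$, say $e_{1}\tensor e_{2}\tensor f$, is
\[
\cbrac{e_{1}}{\cbrac{e_{2}}{f}} - \cbrac{\cbrac{e_{1}}{e_{2}}}{f} - \cbrac{e_{2}}{\cbrac{e_{1}}{f}} \;=\; J_{C}(e_{1},e_{2},\D f),
\]
because the $J\circ\partial$ term of the homotopy equation does not vanish: the differential replaces the function $f$ by the \emph{section} $\D f$, on which $J_{C}=-T$ is nontrivial. A short computation using axiom (2) and the identity from part (a) shows that both sides equal $-\tfrac{1}{4}\,[\rho(e_{1}),\rho(e_{2})](f)$, which is generically nonzero; so the identity you propose to prove (that the left-hand side is $0$) is false, and that step would fail as written. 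Separately, your treatment of the coherence equation is only a plan: ``expand and aim for pairwise cancellation'' does not establish the identity, and this equation is precisely the nontrivial higher Jacobi identity whose verification is the substance of the Roytenberg--Weinstein theorem. To complete your argument you must either carry out that cancellation explicitly (it requires axiom (5) applied to $\rho(e)\innerprod{\cbrac{e_{i}}{e_{j}}}{e_{k}}$ together with axioms (1) and (2), and the bookkeeping is substantial) or fall back on the citation you mention in passing, which is what the paper in fact does.
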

\begin{proof}
The proof that a Courant algebroid in the sense of Definition
\ref{courant_algebroid} gives rise to a semistrict Lie 2-algebra follows from the work
done by Roytenberg on graded symplectic supermanifolds
\cite{Roytenberg_graded} and Lie $2$-algebras
\cite{Roytenberg_L2A}. In particular we refer the reader to Example
5.4 of \cite{Roytenberg_L2A} and Section 4 of
\cite{Roytenberg_graded}.

On the other hand, the original construction of Roytenberg
and Weinstein \cite{Roytenberg-Weinstein} gives a $L_{\infty}$-algebra 
on the complex:
\[
0 \rightarrow \ker \D \stackrel{\iota}{\rightarrow} \cinf(M)
\stackrel{\D}{\rightarrow}\Gamma(E),
\]
with trivial structure maps $l_{n}$ for $n \geq 3$. Moreover, the map
$l_{2}$ (corresponding to the bracket $\cbrac{\cdot}{\cdot}$ given above) is trivial in
degree $>1$ and the map $l_{3}$ (corresponding to the Jacobiator $J_{C}$) is
trivial in degree $>0$. Hence we can restrict this
$L_{\infty}$-algebra to our complex $C$ and use the results
in \cite{HDA6} that relate $L_{\infty}$-algebras with semistrict Lie
$2$-algebras.
\end{proof}  

\section{The Courant algebroid associated to a $2$-plectic manifold}
Now we have the necessary machinery in place to describe how
Courant algebroids connect with $2$-plectic geometry.  First, recall the
discussion in Section \ref{twist} on twisting the bracket of the
standard Courant algebroid $E_{0}$ by a closed
3-form. From Definition \ref{2-plectic_def}, we immediately have the following:
\begin{prop} \label{2-plectic_courant}
Let $(M,\omega)$ be a $2$-plectic manifold. There exists an exact Courant
algebroid $E_{\omega}$ with \v{S}evera class $[\omega]$ over $M$ with underlying vector bundle $TM
\oplus T^{\ast}M \rightarrow M$, anchor $\rho(v,\alpha)=v$,
and bracket and bilinear form given by:
\[
\cbrac{(v_{1},\alpha_{1})}{(v_{2},\alpha_{2})}_{\omega}=
\left([v_{1},v_{2}], \L_{v_{1}}\alpha_{2} - \L_{v_{2}}\alpha_{1} -
\half d\left(\ip{1}\alpha_{2} - \ip{2}\alpha_{1} \right) +
\ip{2}\ip{1} \omega \right) , 
 \]
\[ \innerprod{(v_{1},\alpha_{1})}{(v_{2},\alpha_{2})}=
\ip{1}\alpha_{2} + \ip{2}\alpha_{1}.
\]
\end{prop}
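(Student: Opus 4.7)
The plan is to observe that this proposition is essentially a direct corollary of the twisting construction reviewed in Section \ref{twist}. By Definition \ref{2-plectic_def}, a 2-plectic form is in particular a closed 3-form, and nondegeneracy of $\omega$ plays no role in either the construction of the Courant algebroid or the calculation of its \v{S}evera class. The first step is therefore to invoke the \v{S}evera-Weinstein twisting procedure applied to the standard Courant algebroid $E_0$ of Example \ref{standard} with the closed 3-form $\omega$: this produces a vector bundle $TM \oplus T^{\ast}M$ equipped with the anchor and bilinear form of $E_0$ and with the bracket displayed in the statement.

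Verifying that this data satisfies the Courant axioms of Definition \ref{courant_algebroid} reduces, as cited in Section \ref{twist}, to checking that adding the term $\omega(v_1,v_2,\cdot)$ to $\cbrac{\cdot}{\cdot}_0$ preserves all five axioms. The bilinear form, anchor, and $\D$-map are unchanged, so axioms (2)-(5) transfer from $E_0$ essentially formally. The only nontrivial check is axiom (1): the Jacobiator of $\cbrac{\cdot}{\cdot}_\omega$ must still equal $-\D T_\omega$ for the trilinear function $T_\omega$ built out of $\cbrac{\cdot}{\cdot}_\omega$ and $\innerprod{\cdot}{\cdot}$ via Definition \ref{courant_algebroid}. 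A direct Cartan-calculus computation shows that the extra $\omega$-contributions to $T_\omega$ assemble into the term $\omega(v_1,v_2,v_3)$, and the condition $d\omega = 0$ is exactly what is needed for the corresponding correction to the Jacobiator to be the coboundary of this term.

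Next I would verify exactness of the anchor sequence. Since the anchor $\rho(v,\alpha) = v$ and the bilinear form are identical to those of $E_0$, so is the dual map $\rho^{\ast}$, and the sequence
\[
0 \to T^{\ast}M \stackrel{\rho^{\ast}}{\to} E_\omega \stackrel{\rho}{\to} TM \to 0
\]
coincides with the manifestly split exact sequence attached to $E_0$. Hence $E_\omega$ satisfies Definition \ref{exact}.

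Finally, I would identify the \v{S}evera class using the calculation carried out at the end of Section \ref{twist}. The map $A \maps TM \to E_\omega$ given by $A(v) = (v,0)$ is a connection in the sense of Definition \ref{connection}: it splits the anchor, and its image is isotropic since $\innerprod{(v_1,0)}{(v_2,0)} = 0$. The curvature 3-form of $A$ is then $\innerprod{\cbrac{A(v_1)}{A(v_2)}_\omega}{A(v_3)} = \omega(v_1,v_2,v_3)$, so $[\omega]$ is the \v{S}evera class of $E_\omega$. The main, and really only, obstacle in this plan is the axiom verification under the twist, but this is classical and already subsumed in the cited results of \v{S}evera and Weinstein.
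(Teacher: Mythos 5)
Your proposal is correct and follows exactly the route the paper takes: the paper presents this proposition as an immediate consequence of the twisting construction in Section \ref{twist} (a 2-plectic form is in particular a closed 3-form, the twisted bracket yields a Courant algebroid by the cited \v{S}evera--Weinstein results, and the curvature computation with the connection $A(v)=(v,0)$ identifies the \v{S}evera class as $[\omega]$). You have merely spelled out the axiom and exactness checks that the paper leaves implicit.
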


More importantly, the Courant algebroid constructed in Proposition \ref{2-plectic_courant}
not only encodes the $2$-plectic structure $\omega$, but also the
corresponding Lie $2$-algebra  constructed in Theorem
\ref{semistrict}:

\begin{theorem} \label{main_thm} Let $(M,\omega)$ be a $2$-plectic
  manifold and let $E_{\omega}$ be its corresponding Courant
  algebroid. Let $L(M,\omega)$ and $C(E_{\omega})$ be the semistrict
  Lie 2-algebras corresponding to $(M,\omega)$ and $E_{\omega}$,
  respectively.  Then there exists a homomorphism embedding
  $L(M,\omega)$ into $C(E_{\omega})$.
\end{theorem}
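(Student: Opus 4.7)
The plan is to exhibit an explicit Lie $2$-algebra homomorphism $\phi = (\phi_0, \phi_1, \phi_2)$ and verify its defining equations.

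I would set
\[
\phi_0(\alpha) = (v_\alpha, -\alpha), \qquad \phi_1(f) = -f, \qquad \phi_2(\alpha,\beta) = \half\bigl(\ip{\beta}\alpha - \ip{\alpha}\beta\bigr).
\]
Injectivity is immediate from the 1-form component of $\phi_0$, so once the morphism axioms are in place $\phi$ will automatically be an embedding. The chain-map condition $\D \circ \phi_1 = \phi_0 \circ d$ holds because every exact 1-form is Hamiltonian with Hamiltonian vector field $0$, giving $\phi_0(df) = (0, -df) = \D(-f)$.

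To verify the chain-homotopy axiom for $\phi_2$ in bidegree $(0,0)$, I would compute the difference $\cbrac{\phi_0(\alpha)}{\phi_0(\beta)}_\omega - \phi_0(\brac{\alpha}{\beta})$ using the bracket of Proposition~\ref{2-plectic_courant}. The vector-field components agree by Proposition~\ref{semi-bracket}(1). On the 1-form side, Cartan's formula $\mathcal{L}_v = d\iota_v + \iota_v d$ together with $d\alpha = -\ip{\alpha}\omega$ yields $\mathcal{L}_{v_\alpha}\beta - \mathcal{L}_{v_\beta}\alpha = d(\ip{\alpha}\beta - \ip{\beta}\alpha) + 2\ip{\beta}\ip{\alpha}\omega$. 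Combined with the $\omega$-twist in the Courant bracket and the sign introduced by $\phi_0(\alpha) = (v_\alpha, -\alpha)$, the $\omega$-contributions cancel and the difference reduces to $\half d(\ip{\beta}\alpha - \ip{\alpha}\beta) = d\phi_2(\alpha,\beta)$, which is exactly the 1-form part of $\D\phi_2(\alpha,\beta)$. In bidegree $(0,1)$, the condition becomes $\cbrac{\phi_0(\alpha)}{\phi_1(f)}_\omega = \phi_2(\alpha, df)$, which holds because $\cbrac{e}{f} = \half\innerprod{e}{\D f}$ (Theorem~\ref{courant_L2A}) and $v_{df} = 0$.

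The main obstacle will be verifying the coherence equation~(\ref{coherence}). Using $J_C = -T$, I would expand $T(\phi_0\alpha, \phi_0\beta, \phi_0\gamma)$ as the cyclic sum of $\innerprod{\cbrac{\phi_0(\alpha_i)}{\phi_0(\alpha_j)}_\omega}{\phi_0(\alpha_k)}$, then insert the Courant bracket of Proposition~\ref{2-plectic_courant} and the bilinear form, and repeatedly apply the commutator identity $\mathcal{L}_v \iota_w - \iota_w \mathcal{L}_v = \iota_{[v,w]}$ together with $d\alpha_i = -\ip{\alpha_i}\omega$. The purely 3-form contribution should collect to $\omega(v_\alpha, v_\beta, v_\gamma) = J_L(\alpha, \beta, \gamma)$, accounting for $-\phi_1(J_L(\alpha, \beta, \gamma))$ on the left-hand side of (\ref{coherence}). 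The remaining Cartan-style terms should reassemble into the six $\phi_2$- and mixed-bracket terms on the right: the three $\phi_2(\alpha, \brac{\beta}{\gamma})$-type terms are handled using $v_{\brac{\beta}{\gamma}} = [v_\beta, v_\gamma]$ from Proposition~\ref{semi-bracket}(1), and the three $\cbrac{\phi_2(\alpha,\beta)}{\phi_0(\gamma)}_\omega$-type terms are evaluated using the mixed-degree bracket formula from Theorem~\ref{courant_L2A}. The genuine difficulty here is purely combinatorial bookkeeping of cyclic symmetrizations.
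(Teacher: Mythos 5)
Your proposal is correct and follows essentially the same route as the paper: the identical maps $\phi_0(\alpha)=(v_\alpha,-\alpha)$, $\phi_1(f)=-f$, $\phi_2(\alpha,\beta)=\half\left(\ip{\beta}\alpha-\ip{\alpha}\beta\right)$, the same Cartan-formula computation for the chain-homotopy condition in both bidegrees, and the same expansion of the Jacobiator for the coherence equation (which the paper organizes through its Lemmas \ref{calc1}--\ref{calc3}). The cyclic-sum bookkeeping you defer is exactly what the paper carries out explicitly, so there is nothing substantively different in approach.
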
 

Before we prove the theorem, we introduce some lemmas to ease the calculations. In the
notation that follows, if $\alpha, \beta$ are Hamiltonian 1-forms with
corresponding vector fields $v_{\alpha},v_{\beta}$, then
\begin{equation}
B(\alpha,\beta)=\half(\ip{\alpha}\beta - \ip{\beta}\alpha).
\end{equation}
Also by the symbol $\cp$ we mean cyclic permutations of the symbols $\alpha,\beta,\gamma$.
\begin{lemma}\label{calc1}
If $\alpha, \beta \in \ham$ with corresponding Hamiltonian vector fields
$v_{\alpha},v_{\beta}$, then
$\L_{v_{\alpha}}\beta=\brac{\alpha}{\beta} + d \ip{\alpha}\beta$.
\end{lemma}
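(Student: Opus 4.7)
The plan is to apply Cartan's magic formula to the Lie derivative $\L_{v_\alpha}\beta$ and then use the defining property of Hamiltonian 1-forms to rewrite the result in terms of the bracket.

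First I would write
\[
\lie{\alpha}{\beta} = d\left(\ip{\alpha}\beta\right) + \ip{\alpha}\, d\beta,
\]
which is just Cartan's formula applied to the 1-form $\beta$ and the vector field $v_\alpha$. This already produces the exact term $d\ip{\alpha}\beta$ that appears on the right-hand side of the desired identity, so it only remains to identify $\ip{\alpha}\, d\beta$ with $\brac{\alpha}{\beta}$.

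Next, since $\beta \in \ham$, Definition \ref{hamiltonian} gives $d\beta = -\ip{\beta}\omega$. Substituting and using that interior products of vector fields anticommute on forms yields
\[
\ip{\alpha}\, d\beta = -\ip{\alpha}\ip{\beta}\omega = \ip{\beta}\ip{\alpha}\omega = \brac{\alpha}{\beta},
\]
where the final equality is Definition \ref{semi-bracket.defn}. Combining with the Cartan expression above gives the claim.

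There is no serious obstacle here; the only thing to watch is the sign produced by swapping the order of the interior products, which combines with the sign in the Hamiltonian relation $d\beta=-\ip{\beta}\omega$ to recover the bracket with the correct orientation. This lemma is essentially a bookkeeping identity that will be invoked repeatedly in the proof of Theorem \ref{main_thm} to convert Lie-derivative expressions appearing in the twisted Courant bracket on $E_\omega$ into $\brac{\cdot}{\cdot}$-brackets on $\ham$.
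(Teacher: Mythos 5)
Your proof is correct and is essentially identical to the paper's: both apply Cartan's formula $\L_{v_\alpha} = \iota_{v_\alpha} d + d\,\iota_{v_\alpha}$ to $\beta$, substitute $d\beta = -\ip{\beta}\omega$ from Definition \ref{hamiltonian}, and identify $-\ip{\alpha}\ip{\beta}\omega = \ip{\beta}\ip{\alpha}\omega = \brac{\alpha}{\beta}$. The sign bookkeeping you flag is handled the same way in the paper.
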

\begin{proof} Since $\L_{v} = \iota_v d + d \iota_v$,
\[  \L_{v_{\alpha}}{\beta} = 
\ip{\alpha} d \beta + d \ip{\alpha} \beta =
-\ip{\alpha}\ip{\beta} \omega + d \ip{\alpha} \beta =
\brac{\alpha}{\beta} + d \ip{\alpha} \beta .\]
\end{proof}

\begin{lemma} \label{calc2}
If $\alpha,\beta,\gamma \in \ham$ with corresponding Hamiltonian vector fields
$v_{\alpha},v_{\beta},v_{\gamma}$, then
\[
\iota_{[v_{\alpha},v_{\beta}]} \gamma + \cp = 
-3\ip{\alpha}\ip{\beta}\ip{\gamma}\omega + 2 \left
  (\ip{\alpha}dB(\beta,\gamma) + \ip{\gamma}dB(\alpha,\beta) +
  \ip{\beta}dB(\gamma,\alpha) \right).
\]
\end{lemma}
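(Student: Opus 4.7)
The plan is to convert $\iota_{[v_{\alpha},v_{\beta}]}\gamma$ into Lie derivative expressions via the standard commutator identity $[\L_{v},\iota_{w}] = \iota_{[v,w]}$, then apply Lemma \ref{calc1} to absorb the 2-plectic form $\omega$ through the bracket $\brac{\cdot}{\cdot}$, and finally sum cyclically to recognize the combinations $\ip{\alpha}\beta - \ip{\beta}\alpha = 2B(\alpha,\beta)$.

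Concretely, the first step is to write
\[
\iota_{[v_{\alpha},v_{\beta}]}\gamma = \L_{v_{\alpha}}\iota_{v_{\beta}}\gamma - \iota_{v_{\beta}}\L_{v_{\alpha}}\gamma
= \ip{\alpha}d(\ip{\beta}\gamma) - \ip{\beta}\L_{v_{\alpha}}\gamma,
\]
using that $\ip{\beta}\gamma$ is a function so $\L_{v_{\alpha}}(\ip{\beta}\gamma) = \ip{\alpha}d(\ip{\beta}\gamma)$. Applying Lemma \ref{calc1} to $\L_{v_{\alpha}}\gamma$ and using $\brac{\alpha}{\gamma} = \ip{\gamma}\ip{\alpha}\omega$ gives
\[
\iota_{[v_{\alpha},v_{\beta}]}\gamma = \ip{\alpha}d(\ip{\beta}\gamma) - \ip{\beta}d(\ip{\alpha}\gamma) - \ip{\beta}\ip{\gamma}\ip{\alpha}\omega.
\]
Since $\omega$ is a 3-form and cyclic permutations of three indices are even, the $\omega$-term on the right is $-\ip{\alpha}\ip{\beta}\ip{\gamma}\omega$; cyclic summation of this contribution immediately produces the term $-3\ip{\alpha}\ip{\beta}\ip{\gamma}\omega$ in the claimed identity.

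It remains to examine the cyclic sum of $\ip{\alpha}d(\ip{\beta}\gamma) - \ip{\beta}d(\ip{\alpha}\gamma)$. Writing out all six terms and regrouping those carrying a common outer contraction $\ip{\alpha}$, $\ip{\beta}$, $\ip{\gamma}$, one finds
\[
\sum_{\cp}\bigl(\ip{\alpha}d(\ip{\beta}\gamma) - \ip{\beta}d(\ip{\alpha}\gamma)\bigr)
= \ip{\alpha}d(\ip{\beta}\gamma - \ip{\gamma}\beta) + \ip{\beta}d(\ip{\gamma}\alpha - \ip{\alpha}\gamma) + \ip{\gamma}d(\ip{\alpha}\beta - \ip{\beta}\alpha),
\]
which is precisely $2\ip{\alpha}dB(\beta,\gamma) + 2\ip{\beta}dB(\gamma,\alpha) + 2\ip{\gamma}dB(\alpha,\beta)$ by the definition of $B$.

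The main obstacle is not conceptual but notational: keeping track of signs when shuffling contractions on the 3-form $\omega$ (so that $\ip{\beta}\ip{\gamma}\ip{\alpha}\omega$ is identified with $\ip{\alpha}\ip{\beta}\ip{\gamma}\omega$), and carefully matching the six cyclically permuted terms in the second step so that the difference collapses into the three $B$-combinations. Once this bookkeeping is executed the lemma follows directly, without invoking closedness of $\omega$ or the Hamiltonian condition beyond what Lemma \ref{calc1} already uses.
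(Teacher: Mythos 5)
Your proof is correct and follows essentially the same route as the paper's: the commutator identity $\iota_{[v_{\alpha},v_{\beta}]}=\L_{v_{\alpha}}\iota_{v_{\beta}}-\iota_{v_{\beta}}\L_{v_{\alpha}}$ combined with Lemma \ref{calc1}, followed by the cyclic regrouping into the $B$-combinations. The sign bookkeeping on $\ip{\beta}\ip{\gamma}\ip{\alpha}\omega=\ip{\alpha}\ip{\beta}\ip{\gamma}\omega$ is handled correctly.
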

\begin{proof}
The identity $\iota_{[v_{1},v_{2}]}=\L_{v_{1}}\ip{2}
-\ip{2}\L_{v_{1}}$ and Lemma \ref{calc1} imply:
\begin{align*}
\iota_{[v_{\alpha},v_{\beta}]} \gamma &=
\lie{\alpha}{\ip{\beta}}\gamma - {\ip{\beta}}\lie{\alpha}\gamma\\
&=\lie{\alpha}{\ip{\beta}}\gamma -\ip{\beta}
\left(\brac{\alpha}{\gamma} + d \ip{\alpha}\gamma\right)\\
&=\ip{\alpha}d\ip{\beta}\gamma -\ip{\beta}\ip{\gamma}\ip{\alpha}\omega 
  - \ip{\beta}d\ip{\alpha}\gamma,
\end{align*}
where the last equality follows from the definition of the bracket.

Hence it follows that: 
\begin{align*}
\iota_{[v_{\gamma},v_{\alpha}]} \beta  
&=\ip{\gamma}d\ip{\alpha}\beta -\ip{\alpha}\ip{\beta}\ip{\gamma}\omega 
  - \ip{\alpha}d\ip{\gamma}\beta,\\
\iota_{[v_{\beta },v_{\gamma}]}\alpha  
&=\ip{\beta}d\ip{\gamma}\alpha -\ip{\gamma}\ip{\alpha}\ip{\beta}\omega 
  - \ip{\gamma}d\ip{\beta}\alpha.
\end{align*}

Finally, note $2 \ip{\alpha}dB(\beta,\gamma)=\ip{\alpha}d\ip{\beta}\gamma -
\ip{\alpha}d\ip{\gamma}\beta$. 
\end{proof}

\begin{lemma}\label{calc3}
If $\alpha, \beta \in \ham$ with corresponding Hamiltonian vector fields
$v_{\alpha},v_{\beta}$, then
\[
\lie{\beta}{\alpha}-\lie{\alpha}{\beta}=-2\left(\brac{\alpha}{\beta}+dB(\alpha,\beta)
  \right).
\]
\end{lemma}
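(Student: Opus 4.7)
The plan is to apply Lemma \ref{calc1} twice, invoke the skew-symmetry of the Hamiltonian bracket from Proposition \ref{semi-bracket}, and then rearrange using the definition of $B(\alpha,\beta)$.

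First I would use Lemma \ref{calc1} to write
\[
\L_{v_\alpha}\beta = \brac{\alpha}{\beta} + d\ip{\alpha}\beta, \qquad
\L_{v_\beta}\alpha = \brac{\beta}{\alpha} + d\ip{\beta}\alpha.
\]
Subtracting the first from the second and applying the skew-symmetry $\brac{\beta}{\alpha} = -\brac{\alpha}{\beta}$ from part (2) of Proposition \ref{semi-bracket} yields
\[
\L_{v_\beta}\alpha - \L_{v_\alpha}\beta = -2\brac{\alpha}{\beta} + d\bigl(\ip{\beta}\alpha - \ip{\alpha}\beta\bigr).
\]

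Next I would recognize the last term using the definition $B(\alpha,\beta) = \tfrac{1}{2}(\ip{\alpha}\beta - \ip{\beta}\alpha)$, which gives $\ip{\beta}\alpha - \ip{\alpha}\beta = -2B(\alpha,\beta)$. Substituting produces
\[
\L_{v_\beta}\alpha - \L_{v_\alpha}\beta = -2\brac{\alpha}{\beta} - 2\,dB(\alpha,\beta) = -2\bigl(\brac{\alpha}{\beta} + dB(\alpha,\beta)\bigr),
\]
as required.

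I do not expect any serious obstacle here: the lemma is essentially a bookkeeping identity, and the only substantive input is Lemma \ref{calc1} together with the skew-symmetry of the Hamiltonian bracket. The mild care needed is simply to track the sign convention in the definition of $B$.
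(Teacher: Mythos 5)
Your proposal is correct and is exactly the argument the paper intends: the paper's proof simply says the lemma ``follows immediately from Lemma \ref{calc1} and the definition of $B(\alpha,\beta)$,'' and your computation spells out precisely those steps (the skew-symmetry $\brac{\beta}{\alpha}=-\brac{\alpha}{\beta}$ you cite from Proposition \ref{semi-bracket} is also immediate from the definition $\brac{\alpha}{\beta}=\ip{\beta}\ip{\alpha}\omega$). The signs all check out.
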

\begin{proof}
Follows immediately from Lemma \ref{calc1} and the definition of $B(\alpha,\beta)$.
\end{proof}

\begin{proof}[Proof of Theorem \ref{main_thm}]
We will construct a homomorphism from $L(M,\omega)$ to $C(E_{\omega})$.
Let $L$ be the underlying chain complex of $L(M,\omega)$
consisting of Hamiltonian 1-forms in degree 0 and smooth functions in
degree 1. Let $C$ be the underlying chain of $C(E_{\omega})$
consisting of global sections of $E_{\omega}$ in degree 0 and smooth
functions in degree 1. The bracket $\tcbrac{\cdot}{\cdot}$ denotes the
extension of the bracket on $\Gamma(E_{\omega})$ to the complex $C$ in the
sense of Theorem \ref{courant_L2A}.
Let $\phi_{0} \maps L_{0} \to C_{0}$ be given by
\[
\phi_{0}(\alpha)=\left(v_\alpha, -\alpha \right),
\]
where $v_{\alpha}$ is the Hamiltonian vector field corresponding to
$\alpha$.
Let $\phi_{1}\maps L_{1} \to C_{1}$ be given by
\[
\phi_{1}(f)=-f.
\]
Finally let $\phi_{2} \maps L_{0}\tensor L_{0}\to C_{1}$ be given
by
\[
\phi_{2}(\alpha,\beta)=-B(\alpha,\beta)=-\half(\ip{\alpha}\beta-
\ip{\beta}\alpha).
\]

Now we show $\phi_{2}$ is a well-defined chain homotopy in the sense of
Definition \ref{homo}. For degree 0 we have:
\begin{align*}
\tcbrac{\phi_{0}(\alpha)}{\phi_{0}(\beta)}&=
\left([v_{\alpha},v_{\beta}], \L_{v_{\alpha}}(-\beta) - \L_{v_{\beta}}(-\alpha) +
\half d\left(\ip{\alpha}\beta - \ip{\beta}\alpha \right) +
\ip{\beta}\ip{\alpha} \omega \right) \\
&=\left([v_{\alpha},v_{\beta}], -\brac{\alpha}{\beta} +
  d\phi_{2}(\alpha,\beta) \right),
\end{align*}
where the last equality above follows from Lemma \ref{calc3}.
By Proposition \ref{semi-bracket}, the Hamiltonian vector
field of $\brac{\alpha}{\beta}$ is $[v_{\alpha},v_{\beta}]$. Hence we
have:
\[
\tcbrac{\phi_{0}(\alpha)}{\phi_{0}(\beta)}-\phi_{0}(\brac{\alpha}{\beta})=d\phi_{2}(\alpha,\beta).
\]

In degree 1, the bracket $\brac{\cdot}{\cdot}$ is trivial.  Hence it
follows from the definition of $\tcbrac{\cdot}{\cdot}$ and the bilinear
form on $E_{\omega}$ (given in Proposition \ref{2-plectic_courant} ) that
\[
\tcbrac{\phi_{0}(\alpha)}{\phi_{1}(f)}=-\tcbrac{\phi_{1}(f)}{\phi_{0}(\alpha)}=\half
\innerprod{(v_{\alpha},-\alpha)}{(0,-df)} = \phi_{2}(\alpha,df). 
\]
Therefore $\phi_{2}$ is a chain homotopy.

It remains to show the coherence condition (Eq.\ \ref{coherence} in Definition \ref{homo}) is satisfied.
We rewrite the Jacobiator $J_{C}$ as:
\begin{align*}
J_{C}(\phi_0(\alpha),\phi_0(\beta), \phi_0(\gamma))&=-\frac{1}{6}
\innerprod{\cbrac{\phi_{0}(\alpha)}{\phi_0(\beta)}}{\phi_0(\gamma)} +
\cp \\
&=-\frac{1}{6}\innerprod{
  \left([v_{\alpha},v_{\beta}],-\brac{\alpha}{\beta}-dB(\alpha,\beta)
  \right)}{\left(v_{\gamma},-\gamma \right)} + \cp  \\
&=\frac{1}{6} \left( \iota_{[v_{\alpha},v_{\beta}]} \gamma +
  \ip{\gamma}\ip{\beta}\ip{\alpha}\omega + \ip{\gamma}dB(\alpha,\beta)
\right)  + \cp \\
&=-J_{L}(\alpha,\beta,\gamma) + \half
\left(\ip{\gamma}dB(\alpha,\beta) + \cp \right).
\end{align*}
The last equality above follows from Lemma \ref{calc2} and the
definition of the Jacobiator $J_{L}$.
Therefore the left-hand side of Eq.\ \ref{coherence} is
\[
J_{C}(\phi_0(\alpha),\phi_0(\beta), \phi_0(\gamma)) -
\phi_1(J_{L}(\alpha,\beta,\gamma)) =
\half\left(\ip{\gamma}dB(\alpha,\beta) + \cp \right).
\]

By the skew-symmetry of the brackets, the right-hand side of Eq.\
\ref{coherence} can be rewritten as:
\[
\left( \tcbrac{\phi_0(\alpha)}{\phi_2(\beta,\gamma)}+ \cp \right) -
\left(\phi_{2} \left(\brac{\alpha}{\beta},\gamma\right) + \cp \right).
\] 
From the definitions of the bracket, bilinear form and $\phi_2$ we have:
\begin{align*}
\tcbrac{\phi_0(\alpha)}{\phi_2(\beta,\gamma)} + \cp &= \half \innerprod{
  (v_{\alpha}, -\alpha)}{(0,d\phi_{2}(\beta,\gamma)} + \cp\\&= -\half
  \ip{\alpha}dB(\beta,\gamma) + \cp,
\end{align*}
and:
\begin{align*}
\phi_{2} \left(\brac{\alpha}{\beta},\gamma\right) + \cp &=
-\half\left(\iota_{[v_{\alpha},v_{\beta}]} \gamma -
  \ip{\gamma}\ip{\beta}\ip{\alpha}\omega \right)\\
&=-\left(\ip{\alpha}dB(\beta,\gamma) + \cp \right).
\end{align*}
The last equality above follows again from Lemma \ref{calc2}.
Therefore the right-hand side of Eq.\ \ref{coherence} is 
\[
\half\left(\ip{\gamma}dB(\alpha,\beta) + \cp \right).
\]
Hence the maps $\phi_{0}$, $\phi_{1}$, $\phi_{2}$ give a homomorphism of
semistrict Lie 2-algebras.
\end{proof}

We note that Roytenberg \cite{Roytenberg_L2A} has shown that a Courant
algebroid defined using Definition \ref{alt_courant_algebroid} with the
bilinear operation $\circ$ induces a hemistrict Lie 2-algebra on the
complex $C$ described in Theorem \ref{courant_L2A} above. A hemistrict
Lie 2-algebra is a Lie 2-algebra in which the skew-symmetry holds
up to isomorphism, while the Jacobi identity holds as an equality.
We have proven in previous work \cite{BHR} that
a 2-plectic structure also gives rise to a hemistrict Lie 2-algebra on
the complex described in Theorem \ref{semistrict}. One can show that all
results presented above, in particular Theorem \ref{main_thm}, carry over
to the hemistrict case.

\section{Hamiltonian 1-forms as infinitesimal symmetries of the Courant algebroid}
Given a 2-plectic manifold $(M,\omega)$, the Lie 2-algebra of
observables $L(M,\omega)$ identifies particular infinitesimal symmetries
of the corresponding Courant algebroid $E_{\omega}$ via the embedding
described in the proof of Theorem \ref{main_thm}. To see this, we
first recall some basic facts concerning automorphisms of exact
Courant algebroids. The presentation here follows the work of
Bursztyn, Cavalcanti, and Gualtieri
\cite{Bursztyn-Cavalcanti-Gualtieri}.

\begin{definition} \label{automorphism}
  Let $E \rightarrow M$ be a Courant algebroid with bilinear form
  $\innerprod{\cdot}{\cdot}$, bracket $\cbrac{\cdot}{\cdot}$, and
  anchor $\rho \maps E \to TM$. An {\bf automorphism} is a bundle
  isomorphism $F\maps E \to E$ covering a diffeomorphism $\varphi \maps
  M \to M$ such that
\begin{enumerate}
\item{$\varphi^{\ast} \innerprod{F(e_1)}{F(e_{2})} =
    \innerprod{e_{1}}{e_{2}}$,
}
\item{$F \left(\cbrac{e_{1}}{e_{2}}\right)=\cbrac{F(e_{1})}{F(e_{2})}$,
}
\item{$\rho\left(F(e_{1})\right)=\varphi_{\ast}\left(\rho(e_{1})
    \right)$.
}
\end{enumerate}
\end{definition}

Consider the exact Courant algebroid $E_{\omega}$ described in Section
\ref{twist} with underlying vector bundle $TM \oplus T^{\ast}M
\rightarrow M$ and \v{S}evera class $[\omega] \in
H^{3}_{\mathrm{DR}}(M)$. Given a 2-form $B \in \Omega^{2}(M)$, one can
define a bundle isomorphism 
\[
\exp B \maps TM \oplus T^{\ast}M
\to TM \oplus T^{\ast}M 
\]
by
\[
\exp B\left(v,\alpha \right)= \left(v,\alpha + \iota_{v}B\right).
\]
The map $\exp B$ is known as a `gauge transformation'. It covers the
identity $\id \maps M \to M$ and therefore is compatible (in the sense
of Definition \ref{automorphism}) with the anchor
$\rho(v,\alpha)=v$. Since $B$ is skew-symmetric, $\exp B$ preserves
the bilinear form $\innerprod{(v_{1},\alpha_{1})}{(v_{2},\alpha_{2})}=
\ip{1}\alpha_{2} + \ip{2}\alpha_{1}$. However a simple computation
shows that $\exp B$ preserves the bracket
$\cbrac{\cdot}{\cdot}_{\omega} $ (defined in Eq.\
\ref{twisted_bracket}) if and only if $B$ is a closed 2-form:
\[
\cbrac{\exp B \left(v_{1},\alpha_{1}\right)}{\exp
  B \left(v_{2},\alpha_{2}\right)}_{\omega} = \exp B \left
  (\cbrac{(v_{1},\alpha_{1})}{(v_{2},\alpha_{2})}_{\omega+dB} \right).
\]

Given a diffeomorphism $\varphi \maps M \to M$ of the base space, one can
define a bundle isomorphism $\Phi \maps TM \oplus T^{\ast}M
\to TM \oplus T^{\ast}M$ by
\[
\Phi\left( v,\alpha \right) =\left(\varphi_{\ast} v,
  \left(\varphi^{\ast}\right)^{-1} \alpha \right).
\]
The map $\Phi$ satisfies conditions 1 and 3 of Definition
\ref{automorphism} but does not preserve the bracket in general:
\[
\cbrac{\Phi \left(v_{1},\alpha_{1}\right)}{\Phi \left(v_{2},\alpha_{2}
  \right)}_{\omega} = \Phi\left
  (\cbrac{(v_{1},\alpha_{1})}{(v_{2},\alpha_{2})}_{\varphi^{\ast} \omega} \right).
\]
Bursztyn, Cavalcanti, and Gualtieri
\cite{Bursztyn-Cavalcanti-Gualtieri} showed that any automorphism $F$
of the exact Courant algebroid $E_{\omega}$ must be of the form
\begin{equation}
F=\Phi \exp B, \label{exact_auto1}
\end{equation}
where $\Phi$ is constructed from a diffeomorphism $\varphi \maps M \to M$
such that 
\begin{equation}
\omega - \varphi^{\ast}\omega = dB.  \label{exact_auto2}
\end{equation}
This classification of automorphisms allows one to classify the
infinitesimal symmetries as well. Let 
\[
F_{t}=\Phi_{t}\exp tB = \left(\varphi_{t\ast} \exp tB,
  \left(\varphi_{t}^{\ast}\right)^{-1} \exp tB\right)
\]
be a 1-parameter family of automorphisms of the Courant algebroid
$E_{\omega}$ with $F_{0}=\id_{E_{\omega}}$. Let $u \in \mathrm{Vect}(M)$ be
the vector field that generates the flow $\varphi_{-t}$. Then differentiation gives:
\[
\left. \frac{dF_{t}}{dt} \left( v,\alpha \right) \right \vert_{t=0} =
\left ([u,v], \L_{u}\alpha + \iota_{v}B \right).
\]
Since $\omega - \varphi_{t}^{\ast}\omega = tdB$, it
follows that $u$ and $B$ must satisfy the equality:
\begin{equation}
\L_{u}\omega = dB. \label{derivation}
\end{equation}
These infinitesimal transformations are called \textbf{derivations}
\cite{Bursztyn-Cavalcanti-Gualtieri} of the Courant algebroid
$E_{\omega}$, since they correspond to linear first order differential
operators which act as derivations of the non-skew-symmetric bracket:
\begin{align}
(v_{1},\alpha_{1}) \circ_{\omega} (v_{2},\alpha_{2}) &=
\cbrac{(v_{1},\alpha_{1})}{(v_{2},\alpha_{2})}_{\omega} + \half
d\innerprod{(v_{1},\alpha_{1})}{(v_{2},\alpha_{2})}. \label{alt_twist}\\
&=\left([v_{1},v_{2}], \L_{v_{1}}\alpha_{2} - \ip{2}d\alpha_{1} +
  \ip{2} \ip{1}\omega \right).
\end{align}
In general, derivations are
pairs $(u,B) \in \mathrm{Vect}(M) \oplus \Omega^{2}(M)$ satisfying Eq.\
\ref{derivation}. They act on global sections $(v,\alpha) \in
\Gamma(E_{\omega})$ by:
\[
(u,B) \cdot (v,\alpha) = \left ([u,v], \L_{u}\alpha + \iota_{v}B
\right).
\]

Global sections themselves naturally act as derivations via an \textbf{adjoint action} \cite{Bursztyn-Cavalcanti-Gualtieri}.
Given $\left(u,\beta \right) \in \Gamma (E_{\omega})$ let $B$
be the 2-form
\begin{equation}
B = -d\beta + \iota_{u} \omega. \label{B}
\end{equation}
Define $\ad_{(u,\beta)} \maps \Gamma (E_{\omega}) \to \Gamma
(E_{\omega})$ by
\begin{equation}
\ad_{(u,\beta)} (v,\alpha)= (u,B) \cdot (v,\alpha)=\left ([u,v],
  \L_{u}\alpha + \iota_{v}\left(-d\beta + \iota_{u} \omega \right)
\right).
\end{equation}
One can see this is indeed the adjoint action in the usual sense if one considers
the non-skew-symmetric bracket given in Eq.\ \ref{alt_twist}:
\[
\ad_{(u,\beta)} (v,\alpha) = (u,\beta) \circ_{\omega} (v,\alpha).
\]

Recall that in the proof of Theorem \ref{main_thm} we constructed a
homomorphism of Lie 2-algebras using the map
$\phi_{0} \maps \ham \to \Gamma(E_{\omega})$  defined by 
\[
\phi_{0}(\alpha)=(v_{\alpha},-\alpha),
\]
where $v_{\alpha}$ is the Hamiltonian vector field corresponding to $\alpha$. 
Comparing Eq.\ \ref{B} to Definition \ref{hamiltonian} of a Hamiltonian 1-form, we
see that a section $(u,\beta) \in \Gamma(E_{\omega})$ is in the image
of the map $\phi_{0}$ if and only if its adjoint action
$\ad_{(u,\beta)}$ corresponds to the pair $(u,0)  \in \mathrm{Vect}(M)
\oplus \Omega^{2}(M)$. This implies that $\ad_{(u,\beta)}$ preserves
the 2-plectic structure on $M$ and that $-\beta$ is a Hamiltonian 1-form
with Hamiltonian vector field $u$. Also if $u$ is
complete, then Eqs.\ \ref{exact_auto1} and \ref{exact_auto2} imply
that the 1-parameter family $F_{t}$ of Courant algebroid automorphisms
generated by $\ad_{(u,\beta)}$ corresponds to a 1-parameter family of diffeomorphisms
$\varphi_{t} \maps M \to M$ which preserve the
2-plectic structure:
\[
\varphi^{\ast}_{t} \omega = \omega.
\]  
In analogy with symplectic geometry, we call such automorphisms
\textbf{Hamiltonian 2-plectomorphisms}.
 
We provide the following proposition as a summary of the discussion presented in this
section:
\begin{prop}
Let $(M,\omega)$ be a 2-plectic manifold and let $E_{\omega}$
be its corresponding Courant algebroid. There is a one-to-one
correspondence between the Hamiltonian 1-forms $\ham$ on
$(M,\omega)$ and sections $(u,\beta)$ of $E_{\omega}$ whose
adjoint action satisfies the equality 
\[
\ad_{(u,\beta)}(v,\alpha)=\left ( \L_{u} v, \L_u{\alpha} \right).
\]
\end{prop}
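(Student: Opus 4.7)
The plan is to reduce the adjoint-action equality directly to the defining condition of a Hamiltonian 1-form. From the explicit formula recalled just above the proposition,
\[
\ad_{(u,\beta)}(v,\alpha) \;=\; \left([u,v],\; \L_u\alpha + \iota_v\bigl(-d\beta + \iota_u\omega\bigr)\right),
\]
and the identity $[u,v] = \L_u v$, the equality $\ad_{(u,\beta)}(v,\alpha) = (\L_u v,\L_u\alpha)$ holds for every section $(v,\alpha)\in \Gamma(E_\omega)$ if and only if $\iota_v(-d\beta + \iota_u\omega) = 0$ for every vector field $v$ on $M$. Since a 2-form is determined pointwise by its contractions with arbitrary vector fields, this reduces to the single identity $d\beta = \iota_u\omega$, i.e.\ to $B = 0$ in the notation of Eq.\ \ref{B}.

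Comparing this identity with Definition \ref{hamiltonian}, it is precisely the assertion that $-\beta$ is a Hamiltonian 1-form whose Hamiltonian vector field is $u$. The correspondence is therefore given by the mutually inverse assignments
\[
\alpha \;\longmapsto\; (v_\alpha,-\alpha), \qquad (u,\beta) \;\longmapsto\; -\beta,
\]
where the second map is defined on the subset of sections satisfying the adjoint-action equality; both are well-defined by the equivalence just established, and the first is the map $\phi_0$ of Theorem \ref{main_thm}.

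For bijectivity, the round trip starting from $\alpha \in \ham$ returns $\alpha$ tautologically. Starting from a section $(u,\beta)$ satisfying the condition, the round trip returns the pair $(v_{-\beta},\beta)$, and the required identity $v_{-\beta} = u$ follows from $d\beta = \iota_u\omega$ together with the nondegeneracy of $\omega$ built into Definition \ref{2-plectic_def}, which guarantees that the Hamiltonian vector field of $-\beta$ is uniquely determined. No serious obstacle arises: the argument is essentially a direct translation between Definition \ref{hamiltonian} and the vanishing of the 2-form $B$ associated to the section $(u,\beta)$, with nondegeneracy of the 2-plectic form providing the only nontrivial input.
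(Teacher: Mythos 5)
Your proposal is correct and follows essentially the same route as the paper, which presents this proposition as a summary of the preceding discussion: the adjoint action reduces to $(\L_u v,\L_u\alpha)$ precisely when the $2$-form $B=-d\beta+\iota_u\omega$ vanishes, which by Definition \ref{hamiltonian} says exactly that $-\beta$ is Hamiltonian with Hamiltonian vector field $u$, i.e.\ that $(u,\beta)$ lies in the image of $\phi_0$. Your added check that the two assignments are mutually inverse, using nondegeneracy of $\omega$ to pin down the Hamiltonian vector field, is a correct and welcome explicit completion of what the paper leaves implicit.
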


\section{Conclusions}
We suspect that the results presented here are preliminary and
indicate a deeper relationship between 2-plectic geometry and the theory of
Courant algebroids. For example, the discussion of connections and
curvature in Section \ref{class} is reminiscent of the theory of gerbes
with connection \cite{Brylinski}, whose relationship with Courant
algebroids has been already studied \cite{Bressler-Chervov,Severa1}. In
2-plectic geometry, gerbes have been conjectured to play a role in the
geometric quantization of a 2-plectic manifold \cite{BHR}. It will be
interesting to see how these different points of view complement each
other.

In general, much work has been done on studying the geometric
structures induced by Courant algebroids (e.g.\ Dirac structures,
twisted Dirac structures). Perhaps this work can aid 2-plectic
geometry since many geometric structures in this context are somewhat
less understood or remain ill-defined (e.g.\ the notion of a
2-Lagrangian submanifold or 2-polarization).

On the other hand, $n$-plectic manifolds are well
understood in the role they play in classical field theory
\cite{GIMM}, and are also understood algebraically in the sense that
an $n$-plectic structure gives an $n$-term $L_{\infty}$-algebra on a
chain complex of differential forms \cite{CR}. Perhaps these insights
can aid in understanding `higher' analogs of Courant algebroids
(e.g.\ Lie $n$-algebroids) and complement similar ideas discussed by
\v{S}evera in \cite{Severa2}.
\section{Acknowledgments}
We thank John Baez, Yael Fregier, Dmitry Roytenberg, Urs Schrieber,
James Stasheff and Marco Zambon for helpful comments, questions, and conversations.

\end{document}